\definecolor{highlightNEW}{named}{black}
\newtheorem{theorem}{Theorem}[section] 
\newtheorem{example}[theorem]{Example} 
\newtheorem{lemma}[theorem]{Lemma} 
\newtheorem{remark}[theorem]{Remark}
\newdimen\CdotAxis
\newcommand*{\CdotAux}[3]{%
  {%
    \settoheight\CdotAxis{$#2\vcenter{}$}%
    \sbox0{%
      \raisebox\CdotAxis{%
        \scalebox{#1}{%
          \raisebox{-\CdotAxis}{%
            $\mathsurround=0pt #2#3$%
          }%
        }%
      }%
    }%
    % Remove depth that arises from scaling.
    \dp0=0pt %
    % Decrease scaled height.
    \sbox2{$#2\bullet$}%
    \ifdim\ht2<\ht0 %
      \ht0=\ht2 %
    \fi
    % Use the same width as the original \cdot.
    \sbox2{$\mathsurround=0pt #2#3$}%
    \hbox to \wd2{\hss\usebox{0}\hss}%
  }%
}
\def\mathcolor#1#{\@mathcolor{#1}}
\def\@mathcolor#1#2#3{%
  \protect\leavevmode
  \begingroup
    \color#1{#2}#3%
  \endgroup
}
\let\oldalpha\alpha
\renewcommand{\alpha}{\mathcolor{highlightNEW}{\oldalpha}}
\newcommand{\email}[1]{\href{mailto:#1}{#1}}
\title{\textcolor{Navy}{\textsc{Watanabe's expansion: A Solution for the convexity conundrum}}}
\author[1,2]{David Garcia-Lorite\thanks{Corresponding author, \email{david.garcia.lorite@gmail.com}}}
\author[3]{Ra\'{u}l Merino}
\affil[1]{CaixaBank, Quantitative Analyst Team, Plaza de Castilla, 3, 28046 Madrid, Spain,}
\affil[2]{Facultat de Matem\`{a}tiques i Inform\`{a}tica, Universitat de Barcelona, \authorcr Gran Via 585, 08007 Barcelona, Spain,\vspace*{3pt}}
\affil[3]{VidaCaixa S.A., Market Risk Management Unit, \authorcr C/Juan Gris, 2-8, 08014 Barcelona, Spain.}
\date{\normalfont\small\today}
\begin{document}

\maketitle
\begin{abstract}
In this paper, we present a new method for pricing CMS derivatives. We use Mallaivin's calculus to establish a model-free connection between the price of a CMS derivative and a quadratic payoff. Then, we apply Watanabe's expansions to quadratic payoffs case under local and stochastic local volatility. Our approximations are generic. To evaluate their accuracy, we will compare the approximations numerically under the normal SABR model against the market standards: Hagan's approximation, and a Monte Carlo simulation.
\end{abstract}

\section{Introduction}
Option pricing is one of the main tasks of a derivatives desk. Not only to obtain the derivatives prices and trade them but also to calibrate more complex models or manage all the associated risks. In practice, most institutions consider stochastic volatility models to better capture market dynamics. However, including this feature adds more complexity to the calculations, making the processes more complicated and requiring more time to execute.

Over the last decades, a research line has consisted of finding different methods to approximate the price accurately. On the one hand, there are approximations focused on finding an expansion of the implied volatility. For example, using perturbation methods as in \cite{HaganWoodward99}, \cite{Hagan02} and \cite{Jacquier14}, using heat kernel expansions as in \cite{Gatheral12} or \cite{Labordere08}, or asymptotic expansion method as in \cite{KaramiShiraya}. On the other hand, others focused on prices. For example, the Al\`os decomposition formula in \cite{Alos12} expanded later in \cite{Alos20} and \cite{GLMV19a}, adjoint expansions as in \cite{Pagliarani}, the cos methods as in \cite{Fang08} or Watanabe's expansion in \cite{Kunitomo03}, \cite{Takahashi} and \cite{Nagami21}.

%On the one hand, there are expansion on the implied volatility. For local volatility models, \cite{HaganWoodward99} uses the singular perturbation methods, \cite{Jacquier13} find a closed-form approximation for L\'evy type models with jumps and  \cite{Gatheral12} examines the small-time asymptotics using heat kernel methods. In the stochastic volatility case, \cite{Fouque00} derive an asymptotic expansion for general multiscale stochastic volatility models, \cite{Jacquier14} provide an explicit implied volatility approximation for any model with an analytically tractable characteristic function. In a local stochastic volatility case, the most popular approximation is given by \cite{Hagan02}. Also, \cite{Berestycki02} shows that the implied volatiliyy can be obtaining by solving a quasi-linear parabolic partial differential equation and \cite{Labordere08} uses a heat kernel expansion to derive first order asymptotics.

Despite the vast literature on option approximations, there are few written references on quadratic payoffs. These products are not commonly traded but are useful to calculate the convexity's correction for CMS. The main references are \cite{Hagan19} and \cite{HaganWoodward20}. The value of a quadratic call, put, and swaps are given by
\begin{eqnarray}\label{quadratic_payoff_definition}
V^{QC}(T, F, K) :&=& N_{T_p} \mathbb{E}^{\mathbb{Q}^{N}}\left[\left(F_{T}- K\right)^{2}_{+}\right],\\
V^{QP}(T, F, K) :&=& N_{T_p} \mathbb{E}^{\mathbb{Q}^{N}}\left[\left(K-F_{T}\right)^{2}_{+}\right],\\
V^{QS}(T, F, K) :&=& N_{T_p} \mathbb{E}^{\mathbb{Q}^{N}}\left[\left(F_{T}- K\right)^{2}\right].
\end{eqnarray}
where $T$ is the expiration date, $T_p$ the payment date and $F_T$ is the forward rate process associated to the derivative. On the other hand, $\mathbb{E}^{\mathbb{Q}^{N}}\left[\cdot\right]$ is the expected value associated to the numeraire $N_t$. Although there is a closed formula for normal and log-normal when volatility is constant, it is not good enough since these products depend more on market skews and smiles than vanilla options. 
Many banks use replication to price these structures. Note that
\begin{eqnarray}
\left(F_{T}- K\right)^{2}_{+} = 2 \int^{\infty}_{K} \left(F_{T}- K\right)_{+} dK',
\end{eqnarray}
therefore, it is possible to approximate the value as the sum of options with different strikes. That is the main advantage of this approach, as it is consistent with vanilla option pricing, easy to implement, and automatically satisfies the put-call parity. Unfortunately, from a practical standpoint, it is necessary a dense option grid with different strikes. Only a few options are traded per expiration and strike, so it is essential to calculate the grid, making it computationally intensive. In addition, the option prices with higher strikes, which are required for the calculation, are illiquid and have a `fictitious' market price. To overcome this, an alternative methodology based on finding an approximation formula under the SABR model has been proposed by \cite{Hagan19} and \cite{HaganWoodward20}.

The paper aims to obtain an alternative approach for pricing CMS derivatives. Firstly, we will establish a connection between the price of a CMS derivative and a quadratic payoff using a bit of Mallivin calculus. This relationship is obtained without relying on any specific model. Then, we will extend the results of Watanabe's expansions to the case of quadratic payoffs over local and stochastic local volatility. The approximations are generic. To evaluate their accuracy, we will compare the approximations numerically under the normal SABR model against the market standards: Hagan's approximation, and a Monte Carlo simulation.

The structure of the paper is as follows. In Section \ref{sec:Intro_CMS}, we will introduce the main concepts of the CMS market to make the paper self-contained. In Section \ref{sec:CMSConvexity}, we explain how to derive a general expression for the convexity adjustment for CMS caps, CMS floors, and CMS swaps using Malliavin calculus. This expression will depend on the different quadratic payoffs explained above. In Section \ref{sec:Watanabe}, we provide a brief explanation of the Watanabe calculus. We will also give an example of the normal SABR model contrasting the accuracy of the approximation. In Section \ref{sec:QuadraticOptionPrice}, we prove how to use Watanabe calculus to obtain the price of a quadratic call, quadratic put, and quadratic swaps under different models. The approximation is applied to local and stochastic local volatility models. In particular, the accuracy is tested for the normal SABR model by comparing the results against Monte Carlo and Hagan's approximation. Finally, in Section \ref{sec:Conclusion}, we will present our conclusions. 

We provide the code for running all the examples on the papper at \url{https://github.com/Dagalon/PyStochasticVolatility}.

\section{A brief introduction to the CMS market}\label{sec:Intro_CMS}
One of the principal uses of quadratic payoffs is the convexity adjustment calculation, especially for CMS products. Therefore, to make the article self-contained, in this section, we will do a brief introduction to the CMS market.

We will consider the market standard of two curves: one to discount the cash flows and another to estimate the floating coupon. These two curves are equivalent to the discount and repo curves in the equity market. \\

The following notation will be used throughout the paper:
\begin{itemize}
		\item We will denote the discount curve by $d$ and the estimated curve by $e$.
    \item $P^{i}(T)$ is the spot discount factor to T using the curve i=d, e. 
    \item $P^{i}(t,T)= \frac{P^{i}(T)}{P^{i}(t)}$ is the forward discount factor associated to the curve i=d, e from t to T.
		\item $\Delta(T_a,T_b)$ is the fraction's year between $T_a$ and $T_b$.
    \item $L^{i}(t,T_a,T_b) = \frac{1}{\Delta(T_a,T_b)}\Big(\frac{P^{i}(t,T_a)}{P^{i}(t,T_b)} - 1\Big)$ is the floating rate associated with the curve i=d, e.
    %\item $\delta^{i}(T_a,T_b)$ year fraction between $T_a$ and $T_b$ associated with the business convention of the curve i=e,d.	
\end{itemize}

The swap is the most commonly traded derivative in the interest rate market. While there are various swaps types, the market standard is to exchange a fixed coupon for a floating one until the end of the contract. Fixed and floating rate payments can be scheduled on different dates and periods. But, to simplify the notation, we assume that payments are made on the same dates and in the same fractions of the year. Today is denoted as $T_{0}$, and the future payment dates are scheduled by 
\begin{eqnarray*}
\mathcal{T}=\left\{T_{a}, T_{a+1}, \ldots, T_{b}\right\}.
\end{eqnarray*}

The standard swap price is 
\begin{align}\label{swap_payoff}
V^{Swap}(T_0,T_{a}, T_{b}, C) &= w * \Biggl(\sum_{i=a}^{b} \Delta(T_{i-1},T_{i})P^{d}(T_0,T_{i}) C  \nonumber \\
& \qquad-    \sum_{i=a}^{b} \Delta(T_{i-1},T_{i})P^{d}(T_0,T_{i}) \mathbb{E}_{T_0}^{T_i}\bigg[L^{e}(T_{i-1},T_{i-1},T_{i})\bigg]\Biggl).
\end{align}

%\begin{align}\label{swap_payoff}
%V^{Swap}(T_0,T_a,T_b, C) &= w * \left(\sum_{i=1}^{N_{fix}} \delta^{d}(T^{fix}_{i-1},T^{fix}_{i})P^{d}(T_0,T^{fix}_{i}) C \right. \nonumber \\
%& \left.-    \sum_{i=1}^{N_{float}} \delta^{e}(T^{float}_{i-1},T^{float}_{i})P^{d}(T_0,T^{float}_{i}) \mathbb{E}_{T_0}^{T^{float}_i}\left(L^{e}(T^{float}_{i-1},T^{float}_{i-1},T^{float}_{i})\right)\right)
%\end{align}

When we receive a fixed coupon, $w=1$, we are in a receiver swap. On the other hand, if we pay the fixed coupon, $w=-1$, we are in a payer swap. The measure $\mathbb{Q}^{T_i}$ used on the floating leg is associate to the numeraire $N^{T_p}_t:= P^{d}(t, T_p)$. 

There are two key concepts when working with swaps: the swap rate and the annuity factor. The swap rate is the fixed coupon that makes the structure fair. It is defined by 
\begin{align*}
S_{a,b}(T_0) &= \frac{\sum_{i=a}^{b} \Delta(T_{i-1},T_{i})P^{d}(T_0,T_{i}) \mathbb{E}_{T_0}^{T_i}\bigg[L^{e}(T_{i-1},T_{i-1},T_{i})\bigg]}{\sum_{i=a}^{b} \Delta(T_{i-1},T_{i})P^{d}(T_0,T_{i})}.
\end{align*}
The annuity factor is defined by 
\begin{align*}
01(T_0, T_a, T_b) &= \sum_{i=a}^{b} \Delta(T_{i-1},T_{i})P^{d}(T_0,T_{i}).
\end{align*}

A constant maturity swap (CMS) is a fixed-for-floating swap, where the floating leg pays, or receives, periodically a swap rate with a fixed time to maturity. It has the same structure as the standard swap, but changing the floating payment from $\mathbb{E}^{T_i}\big[L^{e}(T_{i-1},T_{i-1},T_{i})\big]$ to $\mathbb{E}_{T_0}^{T_p}\big[S_{i-1,i-1+b}(T_{i-1})\big]$ with $T_{i-1} \leq T_{p_i} < T_i$.\\

The CMS swap price is given by
\begin{align}\label{cms_swap_payoff}
V^{Swap}_{CMS}(T_0,T_{a}, T_{b}, C) &= w * \Biggl(\sum_{i=a}^{b} \Delta(T_{i-1},T_{i})P^{d}(T_0,T_{i}) C \nonumber \\
&\qquad - \sum_{i=a}^{b} \Delta(T_{i-1},T_{i})P^{d}(T_0,T_{p_i}) \mathbb{E}_{T_0}^{T_{p_i}}\bigg[S_{i-1,i-1+b}(T_{i-1})\bigg] \Biggl).
\end{align}

There is a significant difference between valuing a standard swap and a CMS swap. The floating coupon of the standard swap is a martingale under the forward measure $N^{T_p}_t$. In other words,
\begin{equation*}
    \mathbb{E}_{T_0}^{T_i}\bigg[L^{e}(T_{i-1},T_{i-1},T_{i})\bigg] = L^{e}(T_0,T_{i-1},T_{i}).
\end{equation*}
But the swap rate is not, it implies that
\begin{equation*}
    \mathbb{E}_{T_0}^{T_p}\bigg[S_{i-1,i-1+b}(T_{i-1})\bigg] \neq S_{i-1,i-1+b}(T_0).
\end{equation*}
The swap rate is a martingale respect to the measure induced by the numeraire $N^{a,b}_t:=01(t, T_a, T_b)$. Since the floating rate and the swap rate are martingales with respect to different numeraries, a convexity adjustment is necessary for CMS products.

In fact, we have that
\begin{equation*}
    \mathbb{E}_{T_0}^{T_{p_i}}\bigg[S_{i-1,i-1+b}(T_{i-1})\bigg] = \frac{N^{i-1,i-1+b}_{T_{0}}}{P^{d}(T_{0}, T_{p_i})} \mathbb{E}_{T_0}^{T_{p_i}}\bigg[S_{i-1,i-1+b}(T_{i-1}) \frac{N^{i-1,i-1+b}_{T_{i-1}}}{P^{d}(T_{i-1}, T_{p_i})}   \bigg].
\end{equation*}

The essence of CMS modeling lies on the definition of a mapping function $M(t,a,b)$. In \cite{AndreasenPiterbargIII}, a comprehensive discussion is presented regarding the selection of the mapping function and the requirements it must satisfy. We use the following mapping function:
\begin{equation*}
M(t,a,b,p)=\frac{P^{d}(t,T_p)}{N_t^{a,b}}.
\end{equation*}
Then, we have that
\begin{align}\label{convexity_adjustment_cms_swap}
    \mathbb{E}_{T_0}^{T_{p_i}}\bigg[S_{i-1,i-1+b}(T_{i-1})\bigg] &=  \mathbb{E}_{T_0}^{N^{i-1,i-1+b}}\bigg[S_{i-1,i-1+b}(T_{i-1}) \frac{M(T_{i-1},i-1,i-1+b, p_i)}{M(T_{0},i-1,i-1+b, p_i)}  \bigg] \nonumber \\
    &= S_{i-1,i-1+b}(T_{0}) \nonumber \\
		&+ \mathbb{E}_{T_0}^{N^{i-1,i-1+b}}\bigg[S_{i-1,i-1+b}(T_{i-1}) \left(\frac{M(T_{i-1},i-1,i-1+b, p_i)}{M(T_{0},i-1,i-1+b, p_i)} - 1 \right)  \bigg].
\end{align}
The second term in the summand of the right side is known as convexity adjustment for the CMS rate, see \cite{Hagan03} for more details.\\

Concluding this section, we introduce CMS Caps and Floors, the most common CMS derivative products. The paper aims to develop different price expansions under various models for calculating their convexity adjustments. A CMS Cap and CMS Floor consist of a series of calls and puts, named caplets or floorlets, with the underlying $S_{a,b}(t)$. The payoff is given by
\begin{align}\label{cms_cap_floor_payoff}
V^{Cap}_{CMS}(T_0,T_a,T_b, K) &= \left(\sum_{i=a}^{b} \Delta(T_{i-1},T_{i})P^{d}(T_0,T_{p_i}) \mathbb{E}_{T_0}^{T_{p_i}}\bigg[S_{i-1,i-1+b}(T_{i-1})-K)^{+}\bigg] \right), \nonumber \\
V^{Floor}_{CMS}(T_0,T_a,T_b, K) &= \left(\sum_{i=a}^{b} \Delta(T_{i-1},T_{i})P^{d}(T_0,T_{p_i}) \mathbb{E}_{T_0}^{T_{p_i}}\bigg[K-S_{i-1,i-1+b}(T_{i-1}))^{+}\bigg] \right). \nonumber 
\end{align}
Now, we will focus on calculating the value of a single caplet, that is
\begin{equation*}
P^{d}(0,T_p) \mathbb{E}^{T_p}\bigg[ \Big(S_{a,b}(T_a) - K\Big)^{+}\bigg] = P^{d}(0,T_a) \mathbb{E}^{T_a}\bigg[ \Big(S_{a,b}(T_a) - K\Big)^{+} P^{d}(T_a,T_p)\bigg].
\end{equation*}
Using a change of numeraire, we have that 
\begin{align*}
N^{a,b}_0 \mathbb{E}^{N^{a,b}}\bigg[ \Big(S_{a,b}(T_a) - K\Big)^{+} M(T_a,a,b,p)\bigg] &= N^{a,b}_0 \mathbb{E}^{N^{a,b}}\bigg[ \Big(S_{a,b}(T_a) - K\Big)^{+} \bigg]\nonumber \\ 
&+ N^{a,b}_0  \mathbb{E}^{N^{a,b}}\bigg[ \Big(S_{a,b}(T_a) - K\Big)^{+} \Big(M(T_a,a,b,p) - 1\Big) \bigg] 
\end{align*}
where %we should remember that  
$N^{a,b}_0 = 01(0,a,b)$
and $M(t,a,b,p)=\frac{P(t,T_p)}{01(t,a,b)}.$ The second term is the convexity adjustment for a CMS Cap. To calculate this term, it is usually assumed that
\begin{equation*}
M(t,a,b,p)= M_{p}\big(S_{a,b}(t)\big),
\end{equation*}
and use the following replication formula
\begin{equation*}
\bigg(g\Big(S_{a,b}(t)\Big)\bigg)^{+}=g^{\prime}(K)\Big(S_{a,b}(t) - K\Big)^{+} + \int_{K}^{\infty} \Big(S_{a,b}(t) - u\Big)^{+} g^{\prime \prime}(u) du
\end{equation*}
with 
$$g\Big(S_{a,b}(t)\Big)= \Big(S_{a,b}(t) - k\Big) \Big(M_{p}(S_{a,b}(t))-1\Big).$$

The primary concern with this formula is that it requires knowledge of the swaption price on the strike grid where the right side tends to $\infty$. Consequently, the convexity adjustment heavily relies on the extrapolation assumption.

In recent papers such as \cite{Hagan19} and \cite{HaganWoodward20}, the authors have proposed a new methodology that expresses the convexity adjustment for CMS Caps, CMS Floors, and CMS Swaps as a quadratic payoff. Using the singular perturbation technique, they obtain an expansion of the implied volatility for the quadratic payoff when the dynamics of the underlying follow a SABR-type model. 

In the following sections, we will demonstrate how to use the Watanabe calculus to compute the price of quadratic payoff under a general stochastic local volatility model.

\section{The connection between Quadratic Payoffs and the CMS Cap-Floor convexity}\label{sec:CMSConvexity}
In this section, we will illustrate the connection between the quadratic payoffs and the convexity adjustment for CMS Cap, Floor, and Swap instruments. We will use Malliavin's techniques to derive a generic representation for this adjustment. For more detailed content on Malliavin calculus see \cite{AlosLorite} or \cite{Nualart}. 

We will start with the convexity adjustment for a CMS Cap, as we saw in Section \ref{sec:Intro_CMS}. We have that
\begin{align}\label{ca_cms_cap}
CA_{CMS}^{cap}(0,a,b)&= N^{a,b}_0 \mathbb{E}^{N^{a,b}}\Bigg[ \Big(S_{a,b}(T_a) - K\Big)^{+} \bigg(M(T_a,a,b,p) - 1\bigg) \Bigg] \nonumber \\
&= N^{a,b}_0 \mathbb{E}^{N^{a,b}}\Bigg[ \Big(S_{a,b}(T_a) - K\Big)^{+} \bigg( \mathbb{E}^{N^{a,b}}\Big[M(T_a,a,b,p)|S_{a,b}(T_a)\Big] - 1\bigg) \Bigg] \nonumber \\
&= N^{a,b}_0 \mathbb{E}^{N^{a,b}}\Bigg[ \Big(S_{a,b}(T_a) - K\Big)^{+} \bigg(M(S_{a,b}(T_a)) - 1\bigg) \Bigg] \nonumber \\
\end{align}
where we will define 
\begin{equation*}
M\big(S_{a,b}(T_a)\big):= \mathbb{E}^{N^{a,b}}\Big[M(T_a,a,b,p)|S_{a,b}(T_a)\Big].
\end{equation*}
Here, the function $M\big(S_{a,b}(T_a)\big)$ is called the mapping function. A popular choice for this function is the linear model. In this model, it is assumed that
\begin{equation*}
M\big(S_{a,b}(T_a)\big)= \alpha(T_a) S_{a,b}(T_a) + \beta(T_a).
\end{equation*}
The parameters $\alpha(T_a)$ and $\beta(T_a)$ are chosen according to the properties of $S_{a,b}(T_a)$ under the measure associated with the numeraire $N_t^{a,b}$. Using the Clark-Ocone formula with $F_t = S_{a,b}(t)$, we have
\begin{equation*}
M_{S_{a,b}(T_a)}= M_{S_{a,b}(0)} + \int_{0}^{T_a} \mathbb{E}_s^{N^{a,b}}\bigg[ \partial_{S} M\big(S_{a,b}(T_{a})\big) D^{W}_s S_{a,b}(T_{a})  \bigg] dW_s.
\end{equation*}
If we freeze $\partial_{S} M\big(S_{a,b}(T_{a})\big)$ in $S_{a,b}(0)$, we get the following approximation
\begin{align*}
M_{S_{a,b}(T_a)} &\approx M_{S_{a,b}(0)} + \partial_{S} M\big(S_{a,b} (0)\big) \int_{0}^{T_a} \mathbb{E}_s^{N^{a,b}}\bigg[  D^{W}_s S_{a,b}(T_{a}) \bigg] dW_s \\
&= M\big(S_{a,b}(0)\big) + \partial_{S} M\big(S_{a,b}(0)\big) \Big(S_{a,b}(T_a) - S_{a,b}(0)\Big).
\end{align*}
Therefore, if we substitute the above equality in \eqref{ca_cms_cap}, we get that
\begin{equation*}
CA_{CMS}^{cap}(0,a,b) \approx \Big(M\big(S_{a,b}(0)\big) - 1\Big) V^{C}(T_a,S_{a,b},K) + \Big(\partial_{S} M\big(S_{a,b}(0)\big)-1\Big) V^{QC}(T_a,S_{a,b},K).
\end{equation*}
Likewise, in the CMS Floor case, we obtain that
\begin{equation*}
CA_{CMS}^{floor}(0,a,b) \approx \Big(M\big(S_{a,b}(0)\big) - 1\Big) V^{P}(T_a,S_{a,b},K) + \Big(\partial_{S} M\big(S_{a,b}(0)\big)-1\Big) V^{QP}(T_a,S_{a,b},K)
\end{equation*}
and, in the swap case, we find that
\begin{equation*}
CA_{CMS}^{swap}(0,a,b) \approx \Big(M\big(S_{a,b}(0)\big) - 1\Big) \Big(S_{a,b}(0) - K\Big) + \Big(\partial_{S} M\big(S_{a,b}(0)\big)-1\Big) V^{QS}(T_a,S_{a,b},K).
\end{equation*}

\begin{example}
In the SABR model, to compute $P^{d}(0,T_p)\mathbb{E}^{T_p}\Big[S_{a,b}(T_a)\Big]$ when $T_a < T_p < T_b $, practitioners often use the replication formula
\begin{align*}
g(S_{a,b}(T_a)) &= g(S_{a,b}(T_a)) + \Big(S_{a,b}(T_a) - S_{a,b}(0)\Big)g^{\prime}(S_{a,b}(0))  \\
 &+ \int_{-\infty}^{S_{a,b}(0)} g^{\prime \prime}(K) \Big(K-S_{a,b}(T_a)\Big)^{+} dK + \int_{S_{a,b}(0)}^{\infty} g^{\prime \prime}(K) \Big(S_{a,b}(T_a)-K\Big)^{+} dK. 
\end{align*} 
Then, if we take $\mathbb{E}^{N^{a,b}}\big[\cdot\big]$, we get that
\begin{align}\label{replication_formula}
N^{a,b}_{0}\mathbb{E}^{N^{a,b}}\Big[g(S_{a,b}(T_a))\Big] &= N^{a,b}_{0} g(S_{a,b}(0)) + N^{a,b}_{0} \int_{-\infty}^{S_{a,b}(0)} g^{\prime \prime}(K) \mathbb{E}^{N^{a,b}}\Big[(K-S_{a,b}(T_a))^{+}\Big] dK  \nonumber \\
 &+ N^{a,b}_{0} \int_{S_{a,b}(0)}^{\infty} g^{\prime \prime}(K) \mathbb{E}^{N^{a,b}}\Big[(S_{a,b}(T_a)-K)^{+}\Big] dK.
\end{align}
Now, we have that
\begin{equation*}
P^{d}(0,T_p)\mathbb{E}^{T_p}\Big[S_{a,b}(T_a)\Big] = N^{a,b}_{0} \mathbb{E}^{N^{a,b}}\Big[M(S_{a,b}(T_a)) S_{a,b}(T_a)\Big].
\end{equation*}
Then, we can utilize the replication formula \eqref{replication_formula} with $g(S_{a,b}(T_a)) = M(S_{a,b}(T_a)) S_{a,b}$ to calculate the convexity adjustment. However, this approach requires assumptions about the distributional behavior of $S_{a,b}(T_a)$ in both the left and right tails. It is necessary to choose a cut-off point until the SABR parametrization works accurately and implement an extrapolation mechanism outside this region. 

There is another approximation using the quadratic payoff provided by the following formula:
\begin{align*}
N^{a,b}_{0} \mathbb{E}^{N^{a,b}}\Big[M(S_{a,b}(T_a)) S_{a,b}\Big] &\approx S_{a,b}(0) + N_0^{a,b} \Big(\partial_{S} M(S_{a,b}(0))-1\Big) \mathbb{E}^{N^{a,b}}\Big[ S^2_{a,b}(T_a)\Big] \\
& = S_{a,b}(0) + N_0^{a,b} \Big(\partial_{S} M(S_{a,b}(0))-1\Big) S^{2}_{a,b}(0) \\ 
&+\Big(\partial_{S} M(S_{a,b}(0))-1\Big) V^{QC}(T_a,S_{a,b}(0),S_{a,b}(0)).
\end{align*} 
The essence of this approach is to calculate, or approximate, the value of $V^{QC}(T_a,S_{a,b}(0),S_{a,b}(0))$.
\end{example}

\section{Basic introduction to Watanabe's exapansion}\label{sec:Watanabe}
In this section, we will give a brief introduction to Watanabe's expansion in the field of quantitative finance, see \cite{Kahl} for more details. This method supposes that we can find an expansion of the underlying $F_t$ in terms of a small parameter $\epsilon << 1$
\begin{align}\label{base_expansion}
    	F_{\epsilon}(T) &= F_{0}+ \epsilon \cdot g_{1}(T) + \epsilon^{2} \cdot g_{2}(T) + \epsilon^{3} \cdot g_{3}(T) + \cdots \nonumber  \\
    	 &= F_{0} + \epsilon \sigma_{0}\sqrt{T} \Big(\hat{g}_{1}(T) + \epsilon \hat{g}_{2}(T) + \epsilon^{2} \hat{g}_{3}(T) + \cdots  \Big)
\end{align}
where 
\begin{eqnarray*}
\hat{g}_{i}(T) = \frac{g_{i}(T)}{\sigma_{0}\sqrt{T}}.
\end{eqnarray*}
The expansion is constructed such that $\hat{g}_{1} \sim \mathcal{N}(0,1)$. We refer to $\sigma_{0}$ as the initial volatility and $T$ as the time to maturity. We use the auxiliary term $\epsilon$ to derive the asymptotic option pricing formula.
To calculate 
\begin{eqnarray*}
V^{C}(T,F_T,K):=\mathbb{E}^{\mathbb{Q}_N}\Big[(F_T - K)^{+}\Big]
\end{eqnarray*}
we will use the forward expansion \eqref{base_expansion}. It is easy to show that
\begin{align}\label{expansion_base_generic_payoff}
\frac{V^{C}(T,F,K)}{N_T} &= \mathbb{E}^{\mathbb{Q}_N}\Bigg[\bigg(x_{0} - K + \epsilon \sigma_{0} \sqrt{T} \sum_{j=1}^{\infty} \epsilon^{j-1} \hat{g}_{j}(T)\bigg)^{+} \Bigg] \nonumber  \\
&= \epsilon \sigma_{0} \sqrt{T} \mathbb{E}^{\mathbb{Q}_N}\Bigg[\bigg(\hat{g}_{1}(T) - y + \epsilon{}\sum_{j=1}^{\infty} \epsilon^{j-1}\hat{g}_{j+1}(T) \bigg)^{+} \Bigg] 
\end{align}
where 
\begin{eqnarray}\label{label_y}
y = \frac{K - F_{0}}{\epsilon \sigma_{0} \sqrt{T}}
\end{eqnarray}
and
\begin{eqnarray*}
V^{C}_{(j)} := \partial^{(j)}_{F} (F - K)^{+}_{|F=\hat{g}_{1}(T),K=y}.
\end{eqnarray*}
Then, using a Taylor's expansion on $V^{C}(T,T,T,T,\cdot,K)$ around $\hat{g}_{1}(T) - y$, we get the following price expansion
\begin{align}\label{expansion_generic_payoff_order_5}
\frac{V^{C}(T,F,K)}{N_T} &= \epsilon \sigma_{0} \sqrt{T} \mathbb{E}^{\mathbb{Q}^{N}}\Bigg[\Big(\hat{g}_{1}(T) - y\Big)^{+} +  \epsilon  V^{C}_{(1)} \hat{g}_{2}(T)  \nonumber \\
& \left.+ \epsilon^{2} \bigg(V^{C}_{(1)} \hat{g}_{3}(T) + \frac{1}{2} V^{C}_{(2)} \hat{g}_{2}^{2} \bigg) \right.  \nonumber \\
& \left. +\epsilon^{3}  \bigg( V^{C}_{(1)}\hat{g}_{4} +V^{C}_{(2)}\hat{g}_{2}(T) \hat{g}_{3}(T) + \frac{1}{6} V^{C}_{(3)}\hat{g}^{2}_{2}(T) \bigg) \right. \nonumber \\ 
&+ \epsilon^{4}  \bigg(V^{C}_{(1)} \hat{g}_{5}(T) + V^{C}_{(2)}\Big(\hat{g}_{2}(T)\hat{g}_{4}(T) + \frac{1}{2}\hat{g}^{2}_{3}(T) \Big) \nonumber \\
&\hspace{1cm}+ \frac{1}{2} V^{C}_{(3)} \hat{g}^{2}_2(T) \hat{g}_3(T) + \frac{1}{24} V^{C}_{(4)} \hat{g}^{2}_{2}(T) \bigg) + O(\epsilon^{5})\Bigg].
\end{align}

Note that to use the price expansion given in \eqref{expansion_generic_payoff_order_5}, we need to know the terms $\hat{g}_i(T)$ for $i=1, \cdots, n$. These terms  are expressed as 
\begin{equation}\label{g_i_definition}
\hat{g}_i(T) = \beta_i(T) I_{(\underbrace{0,\cdots,0}_\text{n-times},\underbrace{1,\cdots,1}_\text{$m_B$-times},\underbrace{1,\cdots,1}_\text{$m_W$-times})}(T)
\end{equation}
where $\beta_i(T)$ is a non-stochastic function and 
\begin{eqnarray*}
I_{(\underbrace{0,\cdots,0}_\text{n-times},\underbrace{1,\cdots,1}_\text{$m_B$-times},\underbrace{1,\cdots,1}_\text{$m_W$-times})}(T)
\end{eqnarray*}
is the next iterated integral

\begin{equation}\label{g_i_integral_representation}
\int_{0}^{T}\ \cdots \int_{0}^{t_{m_B}} \int_{0}^{t^{\prime}_1} \cdots \int_{0}^{t^{\prime}_{m_W}}    \int_{0}^{t^{\prime \prime}_1} \cdots \int_{0}^{t^{\prime \prime}_{n}}    du^{\prime \prime}_{n + 1} \cdots  du^{\prime \prime}_1  dB_{u^{\prime}_{m_B + 1}} \cdots  dB_{u^{\prime}_1}   dW_{u_{m_B+ 1}} \cdots  dW_{u_1}           
\end{equation} 
with $W_t$ and $B_t$ correlated Brownian motions of the underlying and volatility processes, respectively. The integral (\ref{g_i_integral_representation}) may seem complex, but using the Brownian bridge and conditioning on $\hat{g}_1(T)$, the expectation of the integral (\ref{g_i_definition}) is simplified to

\begin{equation}
\mathbb{E}^{\mathbb{Q}_N}\Bigg[ I_{(\underbrace{0,\cdots,0}_\text{n-times},\underbrace{1,\cdots,1}_\text{$m_B$-times},\underbrace{1,\cdots,1}_\text{$m_W$-times})}(T) \Bigg] = \int_{0}^{T} g_{(\underbrace{0,\cdots,0}_\text{n-times},\underbrace{1,\cdots,1}_\text{$m_B$-times},\underbrace{1,\cdots,1}_\text{$m_W$-times})}(u) \hat{\phi}_1(u) du
\end{equation}
where  $\hat{\phi}_1(u)$ is the pdf of  $\hat{g}_1(T)$ and  
\begin{eqnarray*}
g_{(\underbrace{0,\cdots,0}_\text{n-times},\underbrace{1,\cdots,1}_\text{$m_B$-times},\underbrace{1,\cdots,1}_\text{$m_W$-times})}(u) 
\end{eqnarray*}
is a function to be defined.\\

%To illustrate, we will get the above approximation of order $O(\epsilon^{2})$  for the for the normal SABR.
To illustrate, we will consider the normal SABR model and we will obtain the order $O(\epsilon^{2})$ approximation as described above.

\begin{remark}
To simplify the notation, when we have a single Brownian motion driving the dynamics of the underlying, then we will denote
\begin{equation*}
I_{(0,\cdots,0,1,\cdots,1)}(T) = I_{(\underbrace{0,\cdots,0}_\text{n-times};;\underbrace{1,\cdots,1}_\text{$m_W$-times})}(T).
\end{equation*}
\end{remark}

\begin{example}
The objective of this example is to provide a brief overview of how can be used the Watanabe calculus to approximate the price of a call option when the underlying follows a normal SABR model. Consider that the forward price dynamics are given by
\begin{align*}
dF_t &= \sigma_t dW_t \\
d\sigma_t &= \nu \sigma_t dB_t.
\end{align*}
with $<dW_t,dB_t>=\rho dt$. Then, if we will suppose that $\sigma_t$ is small, i.e there is a parameter $\epsilon$ such that $\sigma_t = \epsilon \sigma_{\epsilon,t}$. Therefore, we have that 
\begin{align}\label{forward_normal_sabr}
dF_{\epsilon, t} &= \epsilon \sigma_{\epsilon,t} dW_t \nonumber \\
d\sigma_{\epsilon,t} &= \nu \sigma_{\epsilon,t} dB_t.
\end{align}

\begin{lemma}
Under the dynamics given by \eqref{forward_normal_sabr}, we can re-write the forward price dynamics as:
\begin{align}
F_{\epsilon,T} = F_{0} &+\epsilon \sigma_{0} W_{T} + \nu \epsilon^{2} \sigma_{0}\int^{T}_{0} \int^{u}_{0}  dB_{u_{1}}  dW_{u} \nonumber\\
&+ \nu^{2} \epsilon^{3}\sigma_{0} \int^{T}_{0}\int^{u}_{0}\int^{u_{1}}_{0}  dB_{u_{2}}  dB_{u_{1}}  dW_{u} + \ldots\nonumber\\
&+ \nu^{n-1}\epsilon^{n}\sigma_{0} \int^{T}_{0}\int^{u}_{0}\int^{u_{1}}_{0} \cdots \int^{u_{n-2}}_{0} dB_{u_{n-1}} \cdots  dB_{u_{1}}  dW_{u} \nonumber\\
&+ \nu^{n}\epsilon^{n+1} \int^{T}_{0}\int^{u}_{0}\int^{u_{1}}_{0} \cdots \int^{u_{n-1}}_{0} \sigma_{\epsilon,u_{n}}dB_{u_{n}} \cdots  dB_{u_{1}}  dW_{u}.\nonumber
\end{align}
\end{lemma}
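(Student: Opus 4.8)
The plan is to prove the expansion by iterated substitution --- a stochastic Taylor (Picard) expansion in which each step strips off one It\^{o} integral against $B$ together with one factor of the small vol-of-vol. First I would integrate the forward equation in \eqref{forward_normal_sabr}, which immediately gives $F_{\epsilon,T}=F_{0}+\epsilon\int_{0}^{T}\sigma_{\epsilon,u}\,dW_{u}$, so the whole task reduces to expanding the integrand $\sigma_{\epsilon,u}$. Integrating the volatility equation from $0$ with initial value $\sigma_{\epsilon,0}=\sigma_{0}$ gives $\sigma_{\epsilon,u}=\sigma_{0}+\nu\epsilon\int_{0}^{u}\sigma_{\epsilon,u_{1}}\,dB_{u_{1}}$ (with the understanding that the small parameter also scales $\nu$, as is needed for the first correction to enter at order $\epsilon^{2}$). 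Substituting this identity into itself repeatedly, I would prove by induction on $n$ that
\[
\sigma_{\epsilon,u}=\sigma_{0}\sum_{k=0}^{n-1}(\nu\epsilon)^{k}\,I_{(\underbrace{1,\dots,1}_{k})}(u)+(\nu\epsilon)^{n}\int_{0}^{u}\!\!\int_{0}^{u_{1}}\!\!\cdots\int_{0}^{u_{n-1}}\sigma_{\epsilon,u_{n}}\,dB_{u_{n}}\cdots dB_{u_{1}},
\]
where $I_{(\underbrace{1,\dots,1}_{k})}(u)$ is the $k$-fold iterated It\^{o} integral of the constant $1$ against $B$ on $[0,u]$ (the $k=0$ term being $1$); the inductive step is just the base identity applied to the innermost $\sigma_{\epsilon,u_{n}}$.

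Next I would substitute this finite expansion back into $F_{\epsilon,T}=F_{0}+\epsilon\int_{0}^{T}\sigma_{\epsilon,u}\,dW_{u}$ and use the stochastic Fubini theorem to nest the outer $dW$-integration around the $dB$-integrals, obtaining for each $0\le k\le n-1$
\[
\epsilon\,\sigma_{0}\,(\nu\epsilon)^{k}\int_{0}^{T} I_{(\underbrace{1,\dots,1}_{k})}(u)\,dW_{u}=\nu^{k}\epsilon^{k+1}\sigma_{0}\int_{0}^{T}\!\!\int_{0}^{u}\!\!\cdots\int_{0}^{u_{k-1}}dB_{u_{k}}\cdots dB_{u_{1}}\,dW_{u},
\]
which for $k=0$ is simply $\epsilon\sigma_{0}W_{T}$. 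Summing over $k=0,\dots,n-1$ reproduces exactly the displayed finite part of the Lemma, while the term coming from the remainder of the previous display is $\nu^{n}\epsilon^{n+1}\int_{0}^{T}\!\cdots\int_{0}^{u_{n-1}}\sigma_{\epsilon,u_{n}}\,dB_{u_{n}}\cdots dB_{u_{1}}\,dW_{u}$, i.e.\ the stated remainder; this establishes the identity at every truncation order $n$.

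The only step that is more than bookkeeping is justifying these manipulations and, for the literal infinite-series reading, the vanishing of the remainder. Since $\sigma_{\epsilon,\cdot}$ is an exponential (geometric-Brownian-type) martingale it has finite moments of all orders and $\sup_{u\le T}\E[\sigma_{\epsilon,u}^{2}]<\infty$, so each truncated remainder is a genuine square-integrable iterated It\^{o} integral and the stochastic Fubini step is legitimate. To close the series I would iterate the It\^{o} isometry $n+1$ times to bound the $L^{2}$-norm of the $n$-th remainder by a constant multiple of $(\nu\epsilon)^{2n}\,(T^{n}/n!)\,\sup_{u\le T}\E[\sigma_{\epsilon,u}^{2}]$, which tends to $0$ as $n\to\infty$ for every fixed $\epsilon$; hence the series converges in $L^{2}(\mathbb{Q}_{N})$ to $F_{\epsilon,T}$. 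I expect this iterated-isometry estimate to be the main --- and essentially only --- obstacle; everything else is the induction together with stochastic Fubini.
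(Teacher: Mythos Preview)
Your approach is correct and matches the paper's, which simply says the result follows by iteratively applying the It\^{o}--Taylor expansion; your Picard-iteration/induction argument is exactly that, spelled out in more detail (and you correctly flagged that the expansion as stated requires the small parameter to also scale the vol-of-vol SDE, as in \eqref{general_SLV_mode_small_e}). The $L^{2}$ remainder estimate you add goes beyond what the paper claims or proves, but it is sound and harmless.
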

\begin{proof}
The result is straightforward by iteratively applying the It\^o-Taylor expansion.
\end{proof}

For practical reasons, an option price approximation is sought. Therefore, we consider an approximation of the forward price dynamics.
\begin{remark}\label{4th_fwd}
The 4th order forward price dynamics is
\begin{equation*}
F_{\epsilon,T} = F_{0} + \epsilon \sigma_{0} W_{T} + \epsilon^{2} \nu  \sigma_{0}\int^{T}_{0} \int^{u}_{0}  dB_{u_{1}}  dW_{u} + \epsilon^{3} \nu^{2}\sigma_{0} \int^{T}_{0}\int^{u}_{0}\int^{u_{1}}_{0}  dB_{u_{2}}  dB_{u_{1}}  dW_{u} + O(\epsilon^{4}).\nonumber
\end{equation*}
\end{remark}

To obtain the Watanabe price expansion \eqref{expansion_generic_payoff_order_5}, we need to find an expression for the $\hat{g}_{i}$ terms. Using Remark \ref{4th_fwd} makes it easy to see the analytical expression for each term. Therefore
\begin{align}\label{g_normal_sabr}
\hat{g}_{1}(T) &=\frac{W_{T}}{\sqrt{T}},\nonumber\\
\hat{g}_{2}(T) &= \frac{\nu}{\sqrt{T}} \int^{T}_{0} \int^{u}_{0}  dB_{u_{1}}  dW_{u},\nonumber\\
\hat{g}_{3}(T) &= \frac{\nu^{2}}{\sqrt{T}}  \int^{T}_{0}\int^{u}_{0}\int^{u_{1}}_{0}  dB_{u_{2}}  dB_{u_{1}}  dW_{u}.
\end{align}
On other hand, we have that
\begin{align}
V^{C}_{(1)}(F,K)&=I(F>K), \label{id_function} \\
V^{C}_{(2)}(F,K)&=\delta(F-K) \label{dirac_function}.
\end{align}

\begin{theorem}[Option price expansion of 2nd order under normal SABR]
Under the model \eqref{forward_normal_sabr}, the 2nd order call option approximation price is
\begin{align}\label{sabr_call_option_approximation}
\frac{V^{C}(T,F,K)}{N_T}   = \sigma_{0} \sqrt{T} \Bigg(G(y) &+ \frac{1}{2}  \rho \nu y \phi(y) \sqrt{T} +  \frac{1}{24}  \nu^{2} \bigg( \rho^{2} \Big(3\big(y^{2} - 1\big)^{2} + 4y^{2} + 2\Big)  \bigg) \phi(y)T \nonumber 
\\&+ \frac{2}{3}\hat{\rho}^{2} \Big(y^{2}+2\Big) - 1 \Bigg) + O(\epsilon^3)
\end{align}
where $y$ is given by \eqref{label_y}, $G(y)= \phi(y) - y \bar{\Phi}(y)$, $\phi(\cdot)$ is standard normal pdf, $\Phi(\cdot)$ the standard normal cdf and $\bar{\Phi}(\cdot)=1-\Phi(\cdot)$.
\end{theorem}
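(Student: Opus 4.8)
The plan is to specialise the generic Watanabe price expansion \eqref{expansion_generic_payoff_order_5} to the normal SABR dynamics \eqref{forward_normal_sabr}, truncate it after the $\epsilon^{2}$ term inside the expectation, and then evaluate the three resulting expectations in closed form. Concretely, I would start from
\begin{align*}
\frac{V^{C}(T,F,K)}{N_T} = \epsilon\sigma_{0}\sqrt{T}\,\mathbb{E}^{\mathbb{Q}^{N}}\!\Big[(\hat g_{1}(T)-y)^{+} + \epsilon V^{C}_{(1)}\hat g_{2}(T) + \epsilon^{2}\big(V^{C}_{(1)}\hat g_{3}(T) + \tfrac12 V^{C}_{(2)}\hat g_{2}^{2}(T)\big)\Big] + O(\epsilon^{3}),
\end{align*}
insert the explicit terms $\hat g_{1},\hat g_{2},\hat g_{3}$ from \eqref{g_normal_sabr} together with $V^{C}_{(1)}=\mathbf 1_{\{F>K\}}$ and $V^{C}_{(2)}=\delta(F-K)$ from \eqref{id_function}--\eqref{dirac_function} evaluated at $F=\hat g_{1}(T)$, $K=y$, and reduce the iterated stochastic integrals to polynomials by It\^o's formula, e.g.\ $\int_{0}^{T}\!\int_{0}^{u}dB_{u_{1}}dW_{u}=\int_{0}^{T}B_{u}\,dW_{u}$ and $\int_{0}^{T}\!\int_{0}^{u}\!\int_{0}^{u_{1}}dB_{u_{2}}dB_{u_{1}}dW_{u}=\tfrac12\int_{0}^{T}(B_{u}^{2}-u)\,dW_{u}$.

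The zeroth-order term is immediate: since $\hat g_{1}(T)=W_{T}/\sqrt{T}\sim\mathcal N(0,1)$, a direct integration gives $\mathbb{E}[(\hat g_{1}-y)^{+}]=\int_{y}^{\infty}(x-y)\phi(x)\,dx=\phi(y)-y\bar\Phi(y)=G(y)$. For the first-order term I would write $B=\rho W+\hat\rho\widetilde W$ with $\widetilde W\perp W$ and $\hat\rho=\sqrt{1-\rho^{2}}$, so that $\int_{0}^{T}B_{u}\,dW_{u}=\tfrac{\rho}{2}(W_{T}^{2}-T)+\hat\rho\int_{0}^{T}\widetilde W_{u}\,dW_{u}$; conditioning on $\mathcal F^{W}$ kills the second summand, and $\mathbb{E}[\mathbf 1_{\{W_{T}/\sqrt{T}>y\}}(W_{T}^{2}-T)]=T\int_{y}^{\infty}(x^{2}-1)\phi(x)\,dx=T\,y\phi(y)$, which after the scaling factors in \eqref{g_normal_sabr} yields $\tfrac12\rho\nu y\phi(y)\sqrt{T}$. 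The second-order term splits into $\mathbb{E}[\mathbf 1_{\{\hat g_{1}>y\}}\hat g_{3}]$, handled by the same $\rho/\hat\rho$ decomposition combined with the Brownian-bridge conditioning formula of Section \ref{sec:Watanabe} (so that $\mathbb{E}[\hat g_{3}\mid\hat g_{1}=z]$ becomes a deterministic one-dimensional integral), and $\mathbb{E}[\delta(\hat g_{1}-y)\hat g_{2}^{2}]=\phi(y)\,\mathbb{E}[\hat g_{2}^{2}\mid\hat g_{1}=y]$, which reduces to a Brownian-bridge second-moment computation. Expanding $B_{u}^{2}=\rho^{2}W_{u}^{2}+2\rho\hat\rho W_{u}\widetilde W_{u}+\hat\rho^{2}\widetilde W_{u}^{2}$ separates the contributions into a pure-$\rho^{2}$ part (producing the polynomial $3(y^{2}-1)^{2}+4y^{2}+2$ through Hermite integrals $\int_{y}^{\infty}x^{k}\phi(x)\,dx$) and a $\hat\rho^{2}$ part (producing the $\tfrac23\hat\rho^{2}(y^{2}+2)$ piece), and collecting everything with the overall prefactor gives the stated formula.

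The main obstacle will be the $\epsilon^{2}$ term, for two reasons. First, $\hat g_{2}$ and $\hat g_{3}$ live in higher Wiener chaos, so their conditional laws given $\hat g_{1}$ are not Gaussian and one must actually carry out the Brownian-bridge covariance and variance computations ($\mathbb{E}[W_{u}W_{v}\mid W_{T}]$, $\mathbb{E}[\widetilde W_{u}^{2}]$, and similar), together with the interplay of the two correlated drivers. Second, the factor $V^{C}_{(2)}=\delta(F-K)$ must be given rigorous meaning — namely as evaluation of the density of $\hat g_{1}$ against the relevant conditional moment, i.e.\ via $\mathbb{E}[\delta(Z-y)X]=\phi(y)\,\mathbb{E}[X\mid Z=y]$, justified by a Malliavin integration-by-parts argument. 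Once these conditional moments are available, the remaining work is the lengthy but routine bookkeeping of Gaussian/Hermite integrals and the algebraic simplification into the stated polynomial coefficients, using $\rho^{2}+\hat\rho^{2}=1$ to put the answer in the displayed form.
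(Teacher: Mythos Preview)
Your proposal is correct and follows essentially the same approach as the paper: specialise the generic Watanabe expansion \eqref{expansion_generic_payoff_order_5} to order $\epsilon^{2}$, decompose $B=\rho W+\hat\rho W_{\perp}$, compute the conditional expectations $\mathbb{E}[\hat g_{2}\mid\hat g_{1}]$, $\mathbb{E}[\hat g_{3}\mid\hat g_{1}]$, $\mathbb{E}[\hat g_{2}^{2}\mid\hat g_{1}]$ via Brownian-bridge calculus (this is exactly what the paper does in Appendix~\ref{SABR_Calculations}), integrate the result against the law of $\hat g_{1}$, and finally set $\epsilon\to 1$. The only point you leave implicit that the paper states explicitly is the final $\epsilon\to 1$ limit, but this is a cosmetic step.
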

\begin{proof}
We consider the price expansion \eqref{expansion_generic_payoff_order_5} until order $\epsilon^{2}$. Then, we use the calculation of the $\hat{g}_{i}$ terms given in \eqref{SABR_Calculations}. Finally, we take the limit $\epsilon \to 1$ to obtain the price approximation.
\end{proof}

To check the accuracy of \eqref{sabr_call_option_approximation}, we will use a Monte Carlo method and the implied volatility approximation proposed in \cite{Hagan02} for a range of strikes. The parameters used have been calibrated to the swaption market with the underlying swap tenor of 5 years and maturities of 5 years, 10 years, and 15 years.
\begin{table}[ht]
\centering
\begin{tabular}{||c c c c||}
 \hline
 SABR Parameters & 5Y & 10Y & 15Y \\ [0.5ex] 
 \hline\hline
 $\alpha$ & 0.0083 & 0.0075 & 0.0068 \\ 
 \hline
  $\nu$ & 0.335 & 0.243 & 0.215 \\
 \hline
  $\rho$ & 0.230 & 0.235 & 0.195 \\
 \hline
\end{tabular}
\caption{Normal SABR parameters considering an underlying swap tenor of 5 years.}
\label{table:SABRparams}
\end{table}

%In the following figures, we observe the numerical performance of the approximation.
%\begin{figure}[ht]
%\begin{subfigure}{.3\textwidth}
  %\centering
  %\includegraphics[width=1.0\linewidth]{plots/european_normal_sabr_watanabe_T_5Y_5Y.png}  
  %\caption{$\alpha=0.0083, \nu=0.335, \rho=0.23.$}
  %\label{fig:sub-first}
%\end{subfigure}
%\begin{subfigure}{.3\textwidth}
  %\centering
  %\includegraphics[width=1.0\linewidth]{plots/european_normal_sabr_watanabe_T_5Y_10Y.png}  
  %\caption{$\alpha=0.0075, \nu=0.243, \rho=0.235.$}
  %\label{fig:sub-second}
%\end{subfigure}
%\begin{subfigure}{.3\textwidth}
  %\centering
  %\includegraphics[width=1.0\linewidth]{plots/european_normal_sabr_watanabe_T_5Y_15Y.png}  
  %\caption{$\alpha=0.0068, \nu=0.215, \rho=0.195.$}
  %\label{fig:sub-second}
%\end{subfigure}
%\end{figure}

In figures \ref{fig:Call_5Y}, figures \ref{fig:Call_10Y}, and figures \ref{fig:Call_15Y} we compare the numerical performance of the Watanabe approximation against Hagan's approximation and a Monte Carlo simulation. We observe that the Watanabe approximation is slightly better than Hagan's. 

\begin{figure}[H]
  \includegraphics[width=1.0\linewidth]{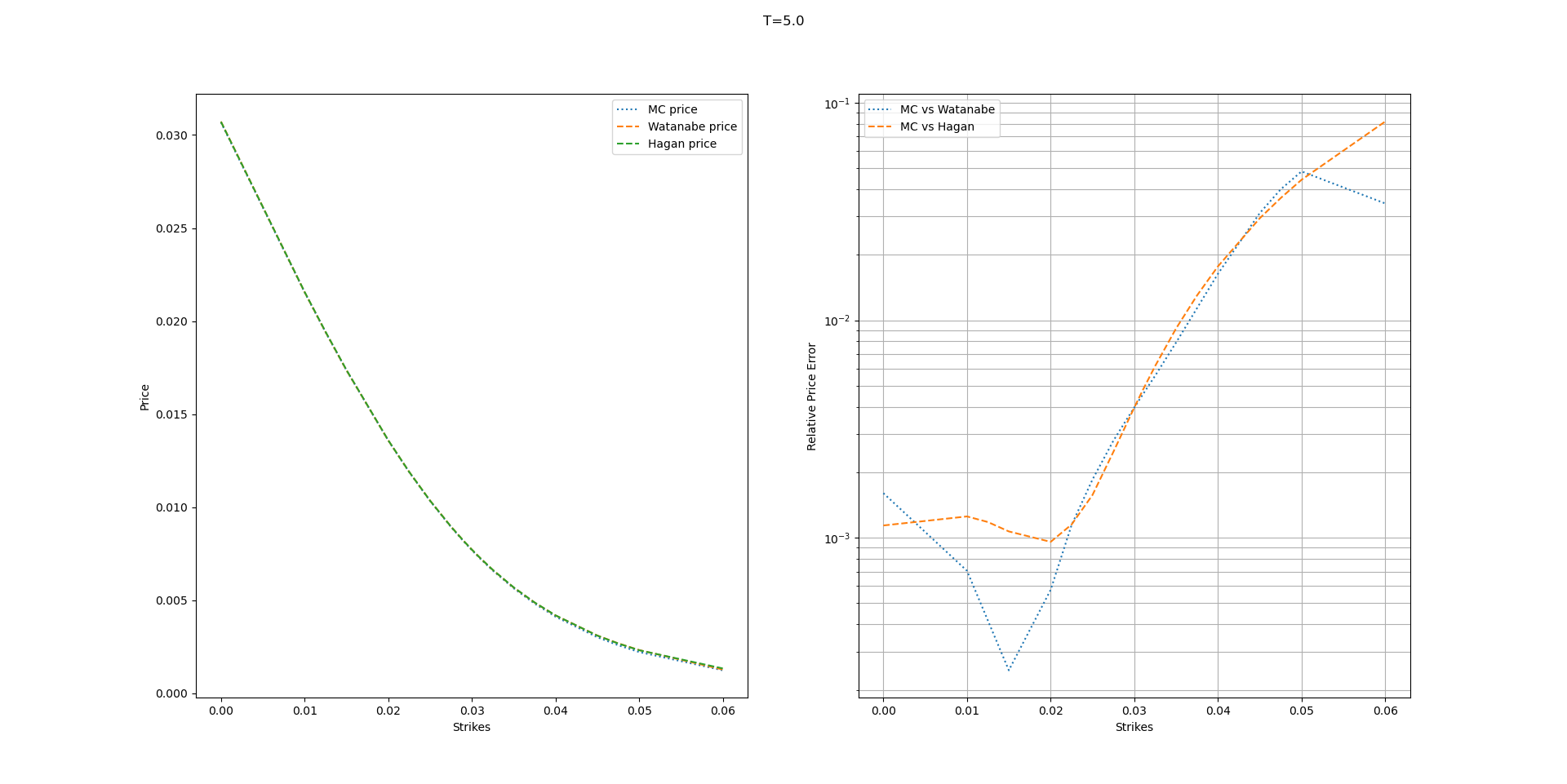}  
  \caption{Comparison of call options using Watanabe's expansion against the industry standards. The parameters considered are $\alpha=0.0083, \nu=0.335, \rho=0.23,$ and $T=5Y.$}
	\label{fig:Call_5Y}
\end{figure}  

\begin{figure}[H]
  \includegraphics[width=1.0\linewidth]{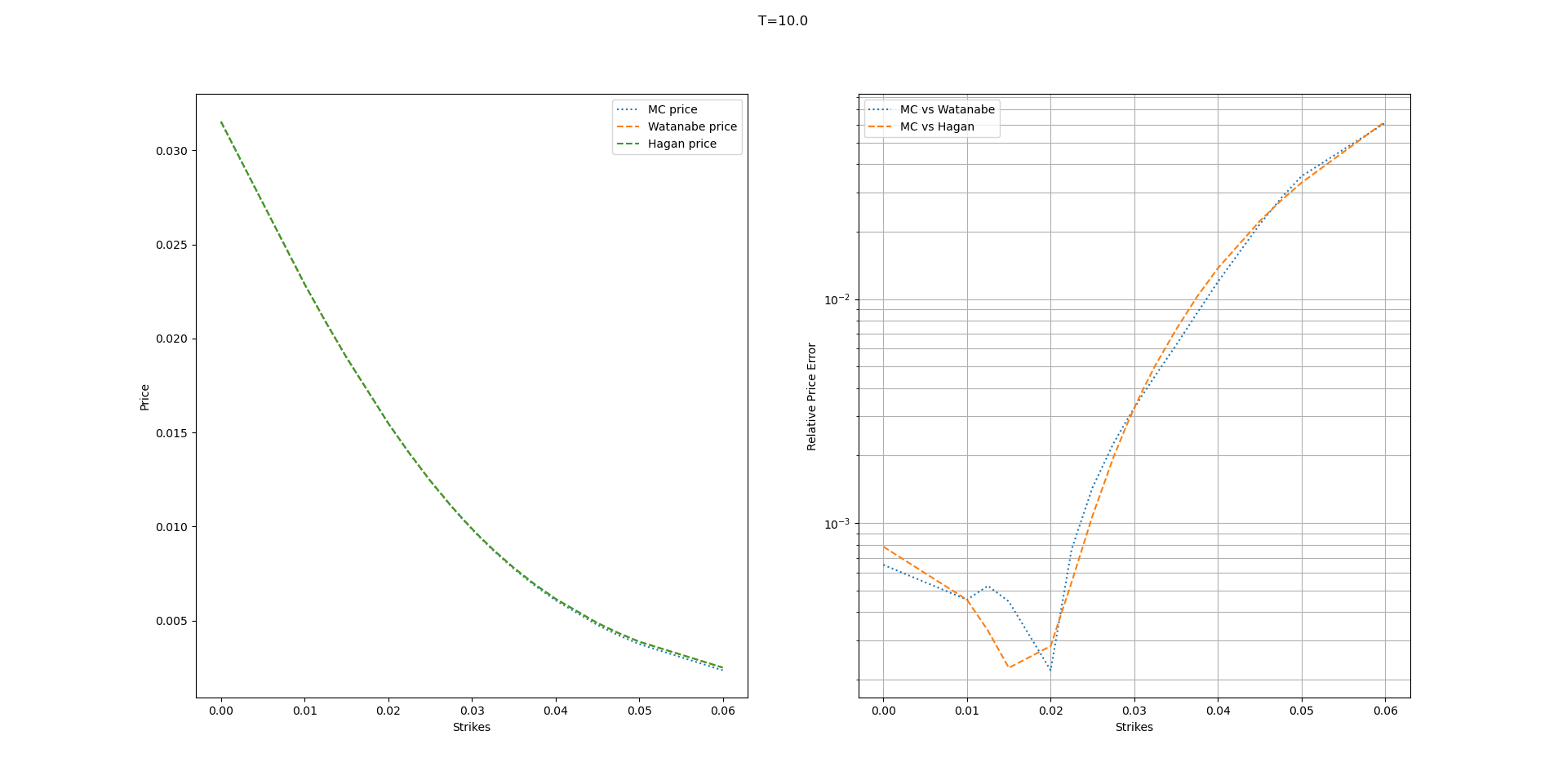}  
  \caption{Comparison of call options using Watanabe's expansion against the industry standards. The parameters considered are $\alpha=0.0075, \nu=0.243, \rho=0.235,$ and $T=10Y.$}
	\label{fig:Call_10Y}
\end{figure}

\begin{figure}[H]
  \includegraphics[width=1.0\linewidth]{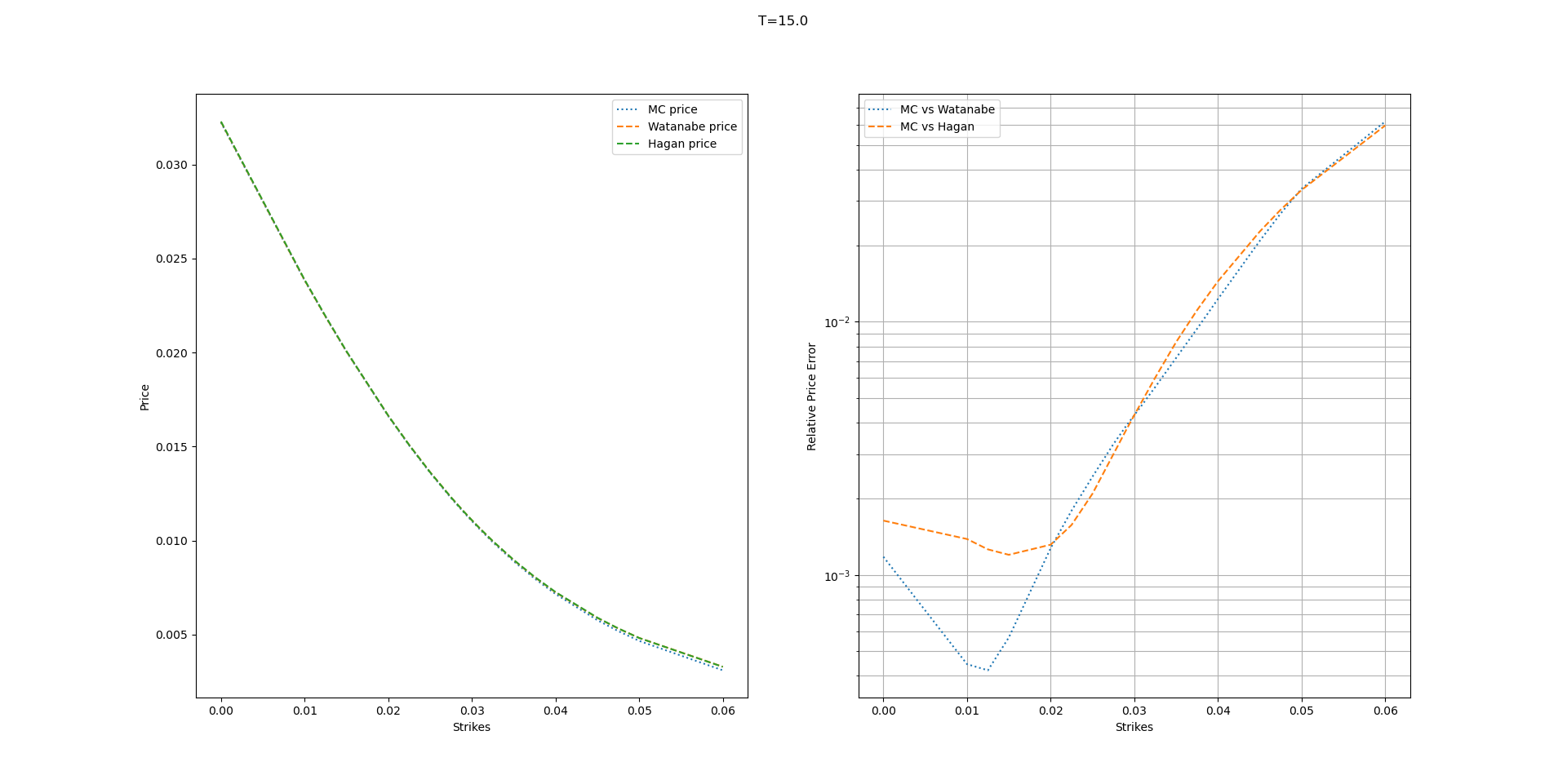}  
  \caption{Comparison of call options using Watanabe's expansion against the industry standards. The parameters considered are $\alpha=0.0068, \nu=0.215, \rho=0.195,$ and $T=15Y.$}
	\label{fig:Call_15Y}
\end{figure}

\end{example}

%\section{Option pricing under SLV using Watanabe's expansion}\label{sec:OptionPriceSLV}

\section{Quadratic option price}\label{sec:QuadraticOptionPrice}
In this section, following the methodology presented in Section \ref{sec:Watanabe}, we will derive a Watanabe expansion for quadratic payoff derivatives when the underlying follows a stochastic volatility model, a local volatility model, and a local stochastic volatility. It is worth noting that the quadratic payoff $V^{QC}(F, K)$ is more influenced by the tail of the distribution than the call option price $V^{C}(T, F, K) $. In the case of local volatility, we will explain why this happens.\\

Consider that the forward price dynamics are given by 
\begin{equation}\label{forward_local_model}
dF_t = \sigma(F_t) dW^{N}_t
\end{equation}
where $\sigma(\cdot)$ is Lipschitz function to ensure existence and uniqueness of the solution. Now, if we use Itô-Tanaka formula to $\Big((F_t - K)^{+}\Big)^{2}$, we get that
\begin{equation}\label{qc_lv_representation}
\frac{V^{QC}(T,F,K)}{N_T} = \bigg((F_0 - K)^{+}\bigg)^{2} + \int_{0}^{T} \mathbb{E}^{\mathbb{Q}_N}\bigg[I\Big(F_u > K\Big) \sigma^{2}(F_u)\bigg] du.
\end{equation}
It is noteworthy that the value of $\frac{V^{QC}(T,F,K)}{N_T}$ depends on the value of $\sigma(F)$ when $F \in (K, \infty)$. For this reason, the quadratic call price is highly sensitive to our assumptions about the behavior of the local volatility function at extreme strikes.\\

%As seen in \eqref{base_expansion}, we suppose that there is a forward price expansion for $F_{\epsilon,T}$ given by 
%\begin{equation}\label{forward_local_exp}
% F_{\epsilon,T} = F_{0} + \epsilon \sigma(F_0)\sqrt{T} \Big(\hat{g}_{1}(T) + \epsilon \hat{g}_{2}(T) + \epsilon^{2} \hat{g}_{3}(T) + \cdots  \Big).
%\end{equation}

As seen in Section \ref{sec:Watanabe}, we will find the Watanabe's price expansion for the quadratic call, put, and swap.

\begin{theorem}[Price Approximation for a Quadratic Call, Put, and Swaps]\label{QV expansions}
Let's suppose there exists a forward price expansion given by \eqref{base_expansion}. Let $y$ defined by \eqref{label_y}, and $I$ defined in \eqref{id_function}. Then, the $O(\epsilon^{3})$-approximation for Watanabe's expansion of a quadratic call option price is
\begin{align}\label{expansion_base_quadratic_call}
\frac{V^{QC}(T,F,K)}{N_T} &= \epsilon^{2} \sigma^{2}(F_0) T \mathbb{E}^{\mathbb{Q}^N}\Bigg[\bigg((\hat{g}_1(T) - y)^{+}\bigg)^{2} + 2 \epsilon \Big(\hat{g}_1(T) - y\Big)^{+} \hat{g}_2(T) \nonumber \\
&+ \epsilon^{2} \bigg( 2 \Big(\hat{g}_1(T) - y\Big)^{+} \hat{g}_3(T) + I\Big(\hat{g}_1(T) > y\Big)  \hat{g}^{2}_2(T) \bigg) + O(\epsilon^{3}) \Bigg].
\end{align}
The $O(\epsilon^{3})$-approximation of a quadratic put option price is
\begin{align}\label{expansion_base_quadratic_put}
\frac{V^{QP}(T,F,K)}{N_T} &= \epsilon^{2} \sigma^{2}(F_0) T \mathbb{E}^{\mathbb{Q}^N}\Bigg[ \bigg((y-\hat{g}_1(T))^{+}\bigg)^{2} - 2 \epsilon \Big(y - \hat{g}_1(T)\Big)^{+} \hat{g}_2(T) \nonumber \\
&+ \epsilon^{2} \bigg( I\Big(y > \hat{g}_1(T)\Big)  \hat{g}^{2}_2(T) - 2 \Big(y-\hat{g}_1(T)\Big)^{+} \hat{g}_3(T) \bigg) + O(\epsilon^{3})  \Bigg].
\end{align}
The $O(\epsilon^{3})$-approximation of a quadratic swap price is 
\begin{align}\label{expansion_base_quadratic_swap}
\frac{V^{QS}(T,F,K)}{N_T} &= \epsilon^{2} \sigma^{2}(F_0) T \mathbb{E}^{\mathbb{Q}^N}\Bigg[ \bigg((\hat{g}_1(T)- y) \bigg)^{2} + \epsilon^{2} \hat{g}^{2}_2(T) \nonumber  \\ 
& +2 \Big(\hat{g}_1(T) - y\Big) \hat{g}_3(T) + O(\epsilon^{3})   \Bigg].  
\end{align} 
\end{theorem}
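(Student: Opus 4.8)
The plan is to repeat, for each of the three quadratic payoffs, the argument that yields the generic call expansion \eqref{expansion_generic_payoff_order_5}: substitute the forward expansion \eqref{base_expansion}, factor out the leading scale, and Taylor-expand the scalar payoff function about its leading argument in the sense of Watanabe's calculus, retaining terms through order $\epsilon^2$ inside the expectation so that the error is $O(\epsilon^3)$ as claimed. The only structural differences from Section~\ref{sec:Watanabe} are that the quadratic payoffs carry the prefactor $\epsilon^2\sigma^2(F_0)T$ instead of $\epsilon\sigma_0\sqrt{T}$, and that the scalar functions involved are merely $C^1$.

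For the quadratic call, using \eqref{base_expansion} and \eqref{label_y} write
\[
F_{\epsilon}(T)-K = \epsilon\sigma_0\sqrt{T}\Big(\hat{g}_1(T)-y+\epsilon R_\epsilon(T)\Big),\qquad R_\epsilon(T):=\hat{g}_2(T)+\epsilon\hat{g}_3(T)+O(\epsilon^2),
\]
where $\sigma_0$ is the initial volatility in \eqref{base_expansion}, equal to $\sigma(F_0)$ for the models \eqref{forward_local_model} of this section. By positive homogeneity of $x\mapsto x^+$, squaring gives
\[
\Big((F_\epsilon(T)-K)^+\Big)^2=\epsilon^2\sigma^2(F_0)\,T\;h\!\Big(\hat{g}_1(T)-y+\epsilon R_\epsilon(T)\Big),\qquad h(x):=(x^+)^2,
\]
so that $\mathbb{E}^{\mathbb{Q}^N}$ already exposes the prefactor in \eqref{expansion_base_quadratic_call} and there remains to expand $\mathbb{E}^{\mathbb{Q}^N}\big[h(\hat{g}_1(T)-y+\epsilon R_\epsilon(T))\big]$.

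Next, Taylor-expand $h$ about $z_0:=\hat{g}_1(T)-y$ with increment $\eta:=\epsilon R_\epsilon(T)=O(\epsilon)$. One has $h'(x)=2x^+$ (continuous), $h''(x)=2\,I(x>0)$ (a bounded function, with no Dirac mass since $h'$ is already continuous), and $h'''(x)=2\delta(x)$; therefore $\tfrac{1}{6}h'''(z_0)\eta^3$ and all higher terms are $O(\epsilon^3)$ and pass into the remainder, leaving
\[
h(z_0+\eta)=\big((\hat{g}_1(T)-y)^+\big)^2+2(\hat{g}_1(T)-y)^+\big(\epsilon\hat{g}_2(T)+\epsilon^2\hat{g}_3(T)\big)+\epsilon^2 I(\hat{g}_1(T)>y)\,\hat{g}_2^2(T)+O(\epsilon^3).
\]
Multiplying by $\epsilon^2\sigma^2(F_0)T$ and taking $\mathbb{E}^{\mathbb{Q}^N}$ gives \eqref{expansion_base_quadratic_call}. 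The quadratic put follows identically with $z_0:=y-\hat{g}_1(T)$ and $\eta:=-\epsilon R_\epsilon(T)$, which flips the signs of the odd-order terms and produces \eqref{expansion_base_quadratic_put}. The quadratic swap is simpler still: $h(x)=x^2$ is smooth, $(z_0+\eta)^2=z_0^2+2z_0\eta+\eta^2$, and the $O(\epsilon)$ cross term $2\epsilon(\hat{g}_1(T)-y)\hat{g}_2(T)$ vanishes after $\mathbb{E}^{\mathbb{Q}^N}$ because $\hat{g}_2(T)$ has zero mean and is orthogonal to $\hat{g}_1(T)$ (first versus second Wiener chaos), giving \eqref{expansion_base_quadratic_swap}.

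The bookkeeping above is routine; the substantive point is the legitimacy of the expansion. Since $h(x)=(x^+)^2$ is only $C^1$ in the classical sense, the symbols $h''(z_0)$ and $h'''(z_0)$ must be read as the tempered distributions $2\,I(\cdot>0)$ and $2\delta$ composed with the non-degenerate Wiener functional $\hat{g}_1(T)$, i.e. as generalized Wiener functionals in the sense recalled in Section~\ref{sec:Watanabe}. To make both the expansion and the $O(\epsilon^3)$ remainder rigorous one needs (i) Malliavin non-degeneracy of $\hat{g}_1(T)$, which is immediate since it is (asymptotically) standard normal; (ii) bounds, uniform for $\epsilon$ near $0$, on the Sobolev--Watanabe norms of $R_\epsilon(T)$ and of the Taylor remainder, which follow from finiteness of all moments of the iterated integrals $\hat{g}_j(T)$ and the Lipschitz hypotheses on the coefficients; and (iii) continuity of the expectation, viewed as pairing with $1$, on the relevant space of generalized Wiener functionals. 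I expect step (ii), the uniform control of the remainder, to be the only place that genuinely requires work; the rest is the standard Watanabe machinery.
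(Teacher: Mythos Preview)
Your proposal is correct and follows essentially the same route as the paper: substitute the forward expansion \eqref{base_expansion}, factor out the scale, and Taylor-expand the payoff about $\hat g_1(T)-y$, exactly as in \eqref{expansion_generic_payoff_order_5}. You supply more detail than the paper's two-line proof---in particular the explicit identification $h''=2I(\cdot>0)$, $h'''=2\delta$, and the Wiener-chaos orthogonality argument that kills the $O(\epsilon)$ cross term in the swap case, which the paper's formula \eqref{expansion_base_quadratic_swap} uses but never justifies.
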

\begin{proof}
We proceed as in \eqref{expansion_generic_payoff_order_5}. We use the forward expansion \eqref{base_expansion} to re-write the option price. Then, we use a Taylor's expansion on $V(T,F,K)$ around $\hat{g}_{1}(T) - y$ for each product.
\end{proof}

\subsection{Quadratic option under local volatility}
When the volatility is constant, that is $\sigma(\cdot)=\sigma$, the quadratic call option price is
\begin{equation}\label{bachelier_quadratic_option}
\frac{V^{QC}(T,F,K)}{N_T} = \bigg(\Big(F_0 - K\Big)^{2} + \sigma^2 T\bigg) \Phi\bigg( \frac{K - F_0}{\sigma \sqrt{T}}\bigg) + \Big(F_0 - K\Big) \sigma \sqrt{T} \phi\bigg(\frac{K - F_0}{\sigma \sqrt{T}}\bigg).
\end{equation}
where $\Phi(\cdot)$ and $\phi(\cdot)$ are, respectively, the density and cumulative function for a standard normal distribution. We are interested in building an option price expansion with base (\ref{bachelier_quadratic_option}). We will consider the small diffusion model
\begin{equation}\label{local_vol_model_small}
dF_{\epsilon,t} = \epsilon \sigma(F_{\epsilon,t}) dW^{N}_t
\end{equation}
where $\sigma(\cdot)$ is a positive function such that belongs to $\mathcal{C}^{3}_b(\mathbb{R})$.

\begin{lemma}\label{expansion_lv_underlying_lemma}
Under the dynamics given by \eqref{local_vol_model_small}, we can re-write the forward price dynamics as:
\begin{align}\label{expansion_lv_underlying}
F_{\epsilon,t} = F_0 &+ \epsilon \sigma(F_0)I_{(1)}(T) \nonumber \\
&+ \epsilon^2 \partial_{F}  \sigma(F_0) \sigma(F_0) I_{(1,1)}(T) \nonumber \\
&+ \epsilon^{3} \bigg(\Big( \partial^{2}_F \sigma(F_0) \sigma^{2}(F_0) + (\partial_F \sigma(F_0))^{2} \sigma(F_0)\Big)I_{(1,1,1)}(T) + \frac{1}{2} \partial^2_F \sigma(F_0) \sigma^{2}(F_0) I_{(0,1)}(T) \bigg) \nonumber \\
&+ E(T,\epsilon)
\end{align}
where under usual regularity conditions in $\sigma(\cdot)$ we have that $E(T,\epsilon) \sim O(\epsilon^{4})$.
\end{lemma}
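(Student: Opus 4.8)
The plan is to obtain the expansion by applying the Itô–Taylor (stochastic Taylor) expansion iteratively to the SDE \eqref{local_vol_model_small}, exactly as was done for the normal SABR model in the earlier lemma, but now keeping track of the derivatives of the coefficient function $\sigma(\cdot)$. First I would write $F_{\epsilon,t} = F_0 + \epsilon\int_0^t \sigma(F_{\epsilon,u})\,dW^N_u$ and expand $\sigma(F_{\epsilon,u})$ around $F_0$ using Itô's formula: since $dF_{\epsilon,u} = \epsilon\sigma(F_{\epsilon,u})\,dW^N_u$ has quadratic variation of order $\epsilon^2$, one gets
\[
\sigma(F_{\epsilon,u}) = \sigma(F_0) + \epsilon\,\partial_F\sigma(F_0)\int_0^u \sigma(F_{\epsilon,r})\,dW^N_r + \tfrac{1}{2}\epsilon^2\,\partial_F^2\sigma(F_0)\int_0^u \sigma^2(F_{\epsilon,r})\,dr + \cdots,
\]
and substituting this back produces the leading term $\epsilon\sigma(F_0)I_{(1)}(T)$ plus higher-order iterated integrals. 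The zeroth-order term $F_0$ and the $\epsilon^1$ term are immediate; the $\epsilon^2$ term comes from replacing $\sigma(F_{\epsilon,r})$ by $\sigma(F_0)$ inside the single stochastic correction, giving $\epsilon^2\,\partial_F\sigma(F_0)\sigma(F_0)\,I_{(1,1)}(T)$.

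Next I would iterate once more to reach order $\epsilon^3$. This requires two contributions: (i) from the $\epsilon\,\partial_F\sigma(F_0)$ correction term, replacing $\sigma(F_{\epsilon,r})$ by its own first-order expansion $\sigma(F_0) + \epsilon\,\partial_F\sigma(F_0)\sigma(F_0)I_{(1)}$, which yields $\epsilon^3(\partial_F\sigma(F_0))^2\sigma(F_0)\,I_{(1,1,1)}(T)$; and (ii) from the $\tfrac12\epsilon^2\,\partial_F^2\sigma(F_0)$ term, replacing $\sigma^2(F_{\epsilon,r})$ by $\sigma^2(F_0)$, which contributes both the $dr$-type iterated integral $\tfrac12\partial_F^2\sigma(F_0)\sigma^2(F_0)\,I_{(0,1)}(T)$ and, after one more Itô step on $\sigma^2$, the term $\epsilon^3\,\partial_F^2\sigma(F_0)\sigma^2(F_0)\,I_{(1,1,1)}(T)$. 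Collecting the two $I_{(1,1,1)}$ contributions gives the stated coefficient $\big(\partial_F^2\sigma(F_0)\sigma^2(F_0) + (\partial_F\sigma(F_0))^2\sigma(F_0)\big)$, and the lone $dr$-integral gives the $\tfrac12\partial_F^2\sigma(F_0)\sigma^2(F_0)\,I_{(0,1)}(T)$ term. Everything not yet expanded is gathered into the remainder $E(T,\epsilon)$.

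The remaining work is to show $E(T,\epsilon)\sim O(\epsilon^4)$. Here I would argue that $E(T,\epsilon)$ is a finite sum of terms each of the form $\epsilon^k$ times an iterated integral whose integrand involves $\sigma$ and its derivatives up to third order evaluated along the path $F_{\epsilon,u}$; since $\sigma\in\mathcal{C}^3_b(\mathbb{R})$, these integrands are uniformly bounded, and standard $L^p$ estimates for iterated Itô integrals (Burkholder–Davis–Gundy together with the fact that an $n$-fold iterated integral over $[0,T]$ has $L^2$-norm of order $T^{n/2}$ up to combinatorial constants) give $\|E(T,\epsilon)\|_{L^p} = O(\epsilon^4)$ for every $p$, uniformly on compact time intervals. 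I expect this last step — making the $O(\epsilon^4)$ claim rigorous in the right norm, rather than merely formal — to be the main obstacle, since it is where the boundedness hypothesis $\sigma\in\mathcal{C}^3_b$ is genuinely used and where one must be careful that the implied constants do not blow up; the combinatorial bookkeeping of which Itô–Taylor terms land at which order of $\epsilon$ is routine but must be done carefully. The authors' one-line proof of the analogous SABR lemma ("iteratively applying the Itô–Taylor expansion") suggests they intend exactly this argument, with the regularity caveat flagged explicitly because, unlike the linear SABR coefficients, a general $\sigma(\cdot)$ needs the boundedness assumption to control the tail of the expansion.
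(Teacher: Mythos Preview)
Your overall strategy --- iteratively applying the It\^o--Taylor expansion to \eqref{local_vol_model_small} --- is exactly what the paper does (its proof is a single sentence to that effect), and your discussion of the remainder $E(T,\epsilon)$ is more than adequate. However, your detailed bookkeeping of the $\epsilon^3$ coefficients contains a slip.

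The displayed ``It\^o formula''
\[
\sigma(F_{\epsilon,u}) = \sigma(F_0) + \epsilon\,\partial_F\sigma(F_0)\int_0^u \sigma(F_{\epsilon,r})\,dW^N_r + \tfrac{1}{2}\epsilon^2\,\partial_F^2\sigma(F_0)\int_0^u \sigma^2(F_{\epsilon,r})\,dr + \cdots
\]
is not It\^o's formula: the derivatives must stay inside the integral as $\partial_F\sigma(F_{\epsilon,r})$ and $\partial_F^2\sigma(F_{\epsilon,r})$. By freezing $\partial_F\sigma$ at $F_0$ too early and then expanding only the $\sigma(F_{\epsilon,r})$ factor, your item (i) captures only the $(\partial_F\sigma)^2\sigma$ piece of the $I_{(1,1,1)}$ coefficient. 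The missing $\partial_F^2\sigma\cdot\sigma^2$ piece actually comes from expanding the \emph{product} $(\partial_F\sigma\cdot\sigma)(F_{\epsilon,r})$ via It\^o, since $(\partial_F\sigma\cdot\sigma)' = \partial_F^2\sigma\cdot\sigma + (\partial_F\sigma)^2$; it does \emph{not} come from your item (ii). Indeed, a further It\^o step on $\sigma^2$ inside the $\tfrac12\epsilon^2\partial_F^2\sigma(F_0)\int_0^u\sigma^2\,dr$ term produces, after the outer $\epsilon\int_0^T\cdot\,dW_u$, a contribution of order $\epsilon^4$ with integrand of $I_{(1,0,1)}$ type and coefficient $\partial_F^2\sigma\cdot\sigma^2\cdot\partial_F\sigma$ --- wrong order, wrong integral, wrong constant. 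Once you keep $\partial_F\sigma(F_{\epsilon,r})$ inside the integral and expand the product $\partial_F\sigma\cdot\sigma$ as a whole, the stated coefficient $\big(\partial_F^2\sigma(F_0)\sigma^2(F_0)+(\partial_F\sigma(F_0))^2\sigma(F_0)\big)$ on $I_{(1,1,1)}$ falls out immediately.
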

\begin{proof}
The result is straightforward by iteratively applying the It\^o-Taylor expansion.
\end{proof}

%\begin{assumption}
%We will assume $\sigma(\cdot)$ is a positive function such that belongs to $\mathcal{C}^{3}_b(\mathbb{R})$ 
%\end{assumption}
%Now, if we apply iteratively Itô formula we get that 
%\begin{align}\label{expanion_lv_underlying}
%F_{\epsilon,t} &= F_0 + \epsilon \sigma(F_0)I_{(1)}(T) + \epsilon^2 \partial_{F}  \sigma(F_0) \sigma(F_0) I_{(1,1)}(T) \nonumber \\
%&+ \epsilon^{3} \left(\left( \partial^{2}_F \sigma(F_0) \sigma^{2}(F_0) + (\partial_F \sigma(F_0))^{2} \sigma(F_0)\right)I_{(1,1,1)}(T) + \frac{1}{2} \partial^2_F \sigma(F_0) \sigma^{2}(F_0) I_{(0,1)}(T) \right) + E(T,\epsilon)
%\end{align}
%where under usual regularity conditions in $\sigma(\cdot)$ we have that $E(T,\epsilon) \sim O(\epsilon^{4})$. 

%Now if we compare the above expression with (\ref{base_expansion}) we have that

As previously stated, to obtain the Watanabe's price expansion, we need to find an expression for the $\hat{g}_{i}$ terms. Comparing Lemma \ref{expansion_lv_underlying_lemma} with the price expansion \eqref{base_expansion}, we have that 

\begin{align}\label{coeff_lv_expansion}
\hat{g}_{1}(T) &= \frac{W_T}{\sqrt{T}} \nonumber \\
\hat{g}_{2}(T) &= \partial_F \sigma(F_0) \frac{I_{(1,1)(T)}}{\sqrt{T}} \nonumber \\
\hat{g}_{3}(T) &= \bigg( \frac{1}{2} \Big(\partial^{2}_F \sigma(F_0)\Big) \sigma(F_0) + \Big(\partial_F \sigma(F_0)\Big)^{2}\bigg) \frac{I_{(1,1,1)}(T)}{\sqrt{T}} + \frac{1}{2} \partial^2_F \sigma(F_0) \sigma(F_0) \frac{I_{(0,1)}(T)}{\sqrt{T}}.
\end{align}

\begin{theorem}[Local volatility expansion for quadratic payoffs]
Under the model \eqref{local_vol_model_small}, consider $y$ is given by \eqref{label_y}, $\phi(\cdot)$ is standard normal pdf, $\Phi(\cdot)$ the standard normal cdf and $\bar{\Phi}(\cdot)=1-\Phi(\cdot)$. Then, the 3rd order quadratic call option approximation price is
\begin{align}\label{LV_quadratic_option_call}
\frac{V^{QC}(T,F,K)}{N_T} = \sigma^{2}(F_0) T &\Bigg(G^{c}_{q}(y) + \partial_{F}\sigma(F_0) \phi(y) \sqrt{T} \nonumber \\
&+ \bigg( \frac{2}{3}\Big( \frac{\partial^{2}_F \sigma(F_0)}{2} \sigma(F_0) + \Big(\partial_F \sigma(F_0)\Big)^{2} \Big)y\phi(y) + \frac{1}{2}\partial^{2}_F\sigma(F_0)\sigma(F_0) \bar{\Phi}(y)    \bigg) T   \nonumber \\
&+ \frac{1}{4}\Big(\partial_F \sigma(F_0)\Big)^{2}  \bigg(\Big(y^3+y\Big)\phi(y)+ 2\bar{\Phi}(y)\bigg)T + O(\epsilon^{3}) \Bigg)
\end{align}
where $G^{c}_{q}(y) = (1+y^2) \bar{\Phi}(y) - y\phi(y)$.\\
The 3rd order quadratic put option approximation price is
\begin{align}\label{LV_quadratic_option_put}
\frac{V^{QP}(T,F,K)}{N_T} = \sigma^{2}(F_0) T &\Bigg(G^{p}_{q}(y) - \partial_{F}\sigma(F_0) \phi(y) \sqrt{T} \nonumber \\
&- \bigg( \frac{2}{3}\Big(  \frac{\partial^{2}_F \sigma(F_0)}{2} \sigma(F_0) + \Big(\partial_F \sigma(F_0)\Big)^{2} \Big)y\phi(y) - \frac{1}{2}\partial^{2}_F\sigma(F_0)\sigma(F_0) \Phi(y) \bigg) T   \nonumber \\
&+ \frac{1}{4}\Big(\partial_F \sigma(F_0)\Big)^{2}  \bigg(2\Phi(y) - \Big(y^3+y\Big)\phi(y)\bigg)T + O(\epsilon^{3}) \Bigg)
\end{align}
where $G^{p}_{q}(y) = (1+y^2) \Phi(y) + y\phi(y)$.\\
Finally, the quadratic swap price is 
\begin{align}\label{LV_quadratic_option_swap}
\frac{V^{QS}(T,F,K)}{N_T} = \sigma^{2}(F_0) T \Bigg(\Big(1+ y^2\Big) + \frac{1}{2}\bigg( \partial^{2}_F\sigma(F_0)\sigma(F_0)+ \Big(\partial_F\sigma(F_0)\Big)^2 \bigg) + O(\epsilon^3) \Bigg).
\end{align}
\end{theorem}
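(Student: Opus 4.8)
The plan is to specialize the generic quadratic-payoff expansions \eqref{expansion_base_quadratic_call}, \eqref{expansion_base_quadratic_put} and \eqref{expansion_base_quadratic_swap} to the local volatility dynamics \eqref{local_vol_model_small}, substituting the explicit coefficients $\hat g_1(T),\hat g_2(T),\hat g_3(T)$ of \eqref{coeff_lv_expansion} (themselves read off the It\^o-Taylor expansion of Lemma \ref{expansion_lv_underlying_lemma}), and then to reduce every surviving term to a one-dimensional Gaussian integral that can be evaluated in closed form. The three statements differ only in which truncated moments of a standard normal appear.

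Concretely, I would first note that, since the underlying is driven by the single Brownian motion $W$, all the iterated integrals entering $\hat g_2(T)$ and $\hat g_3(T)$ are explicit polynomials in $\hat g_1(T)=W_T/\sqrt{T}$: one has $I_{(1,1)}(T)=\tfrac12(W_T^2-T)$ and $I_{(1,1,1)}(T)=\tfrac16(W_T^3-3TW_T)$, i.e. up to powers of $\sqrt{T}$ these are the Hermite polynomials of degree two and three evaluated at $\hat g_1(T)$. The only object that is not a function of $\hat g_1(T)$ alone is $I_{(0,1)}(T)=\int_0^T u\,dW_u$; here I would invoke the conditioning device described after \eqref{g_i_definition}, i.e. decompose $\int_0^T u\,dW_u$ into its projection onto $W_T$ plus an orthogonal Gaussian remainder, the latter dropping out of any expectation against a function of $\hat g_1(T)$, so that effectively $I_{(0,1)}(T)$ may be replaced by $\tfrac{T}{2}W_T$.

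After these substitutions the $\epsilon$-coefficients in \eqref{expansion_base_quadratic_call}--\eqref{expansion_base_quadratic_swap} become finite linear combinations of the elementary expectations $\mathbb{E}[((\hat g_1(T)-y)^{+})^2]$, $\mathbb{E}[(\hat g_1(T)-y)^{+}\hat g_1(T)^k]$ and $\mathbb{E}[\mathbf{1}_{\{\hat g_1(T)>y\}}\hat g_1(T)^k]$ (and the analogous quantities with $(y-\hat g_1(T))^{+}$ and $\mathbf{1}_{\{\hat g_1(T)<y\}}$ for the put), each of which is a truncated standard-normal moment computed by repeated integration by parts using $\phi'(z)=-z\phi(z)$ together with $\int_y^\infty z^n\phi(z)\,dz=y^{n-1}\phi(y)+(n-1)\int_y^\infty z^{n-2}\phi(z)\,dz$. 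The leading term gives $\mathbb{E}[((\hat g_1(T)-y)^{+})^2]=(1+y^2)\bar{\Phi}(y)-y\phi(y)=G^{c}_{q}(y)$ for the call, $G^{p}_{q}(y)=(1+y^2)\Phi(y)+y\phi(y)$ for the put, and $1+y^2$ for the swap. Grouping the remaining terms by the power of $\epsilon$ and finally letting $\epsilon\to1$ (the standard Watanabe normalization, legitimate because the remainder $E(T,\epsilon)$ of Lemma \ref{expansion_lv_underlying_lemma} is $O(\epsilon^4)$ for $\sigma\in\mathcal{C}^3_b$, so that after factoring $\epsilon^2$ out of the squared payoff the neglected terms are $O(\epsilon^3)$) reproduces \eqref{LV_quadratic_option_call}, \eqref{LV_quadratic_option_put} and \eqref{LV_quadratic_option_swap}. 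The swap is the lightest case: there is no indicator and the would-be first-order term $\mathbb{E}[(\hat g_1(T)-y)\hat g_2(T)]$ vanishes by oddness.

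The main obstacle is not a single hard estimate but the bookkeeping at order $\epsilon^2$, where one must add two structurally different contributions: the $\hat g_3(T)$ term, which splits into an $I_{(1,1,1)}$ piece producing $y\phi(y)$-type terms through $\mathbb{E}[(\hat g_1(T)-y)^{+}\hat g_1(T)^k]$ and an $I_{(0,1)}$ piece producing a $\bar{\Phi}(y)$ term after the conditioning step, and the $\mathbf{1}_{\{\hat g_1(T)>y\}}\hat g_2^{2}(T)$ term, which brings in $\mathbb{E}[\mathbf{1}_{\{\hat g_1(T)>y\}}(\hat g_1(T)^2-1)^2]$ and hence the factor $(y^3+y)\phi(y)+2\bar{\Phi}(y)$. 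Keeping every combinatorial constant and every Gaussian moment straight is where the care lies; a secondary point, which I would state but not develop (in the spirit of the proof of Lemma \ref{expansion_lv_underlying_lemma}), is the rigorous control of the $O(\epsilon^3)$ remainder, which rests on the boundedness of $\sigma$ and its first three derivatives and on the non-degeneracy of $\hat g_1(T)$.
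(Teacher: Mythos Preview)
Your proposal is correct and follows essentially the same route as the paper: specialize the generic expansions of Theorem~\ref{QV expansions} to the local-volatility coefficients \eqref{coeff_lv_expansion}, compute the resulting conditional/truncated Gaussian moments (the paper records these computations in Appendix sections \ref{epsilon_0_quadratic_call}--\ref{epsilon_2_quadratic_call}), and let $\epsilon\to1$. Your identification of $I_{(1,1)}$ and $I_{(1,1,1)}$ with Hermite polynomials in $\hat g_1(T)$ and your treatment of $I_{(0,1)}(T)=\int_0^T u\,dW_u$ by projecting onto $W_T$ are exactly the mechanisms the paper uses implicitly (cf.\ the appearance of $\hat g_1(T)T-\tfrac{1}{\sqrt{T}}\int_0^T W_u\,du$ in the $A(T)$ term of Appendix~\ref{epsilon_2_quadratic_call}); if anything, you spell these steps out more transparently than the paper's own proof.
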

\begin{proof}
We consider the quadratic price expansion of Theorem \ref{QV expansions}. In the quadratic call option price, unifying \eqref{epsilon_0_quadratic_call}, \eqref{epsilon_1_quadratic_call} and \eqref{epsilon_2_quadratic_call} along with  \eqref{expansion_lv_underlying}. Then, we take the limit $\epsilon \to 1$ to obtain the quadratic call option price approximation. We proceed similarly with the quadratic put option price approximation and the quadratic swap option price. 
\end{proof}

\begin{remark}
It is important to note that the last expressions satisfy parity for quadratic options, that is
\begin{equation*}
V^{QS}(T,F,K) = V^{QP}(T,F,K) + V^{QC}(T,F,K).
\end{equation*}
\end{remark}

\begin{example}\label{sabr_lv_example}
To verify the accuracy of  \eqref{LV_quadratic_option_call}, we will apply it to the normal SABR model, i.e. $\beta=0$. In this case, the equivalent local volatility has a closed-form solution, given by \cite{Balland}. The expression for the equivalent local volatility is as follows:

\begin{equation} \label{local_vol_normal_sabr}
\sigma(K) = \alpha \sqrt{1+2\rho \nu \left(\frac{K - F_0}{\alpha}\right) + \nu^{2} \left(\frac{K - F_0}{\alpha}\right)^{2}}.
\end{equation}
We will compare the accuracy of \eqref{LV_quadratic_option_call} with the Monte Carlo method and the approximation proposed in  (\cite{Hagan19}) for a range of strikes and maturities. As we have done in Section \ref{sec:Watanabe}, we use SABR parameters calibrated to the swaption market.
%the SABR parameters used have been calibrated to the swaption market with tenor $5Y$ for the underlying swap and maturities $T=5$ years, $10$ years, and $15$ years. 
The parameters used are listed in Table \ref{table:SABRparams}.

In this case, we can see that
\begin{align*}
\partial_{F}\sigma(F_0) &= \rho \nu ,\nonumber \\
\partial^{2}_{F}\sigma(F_0) &= \frac{\nu^{2}}{\alpha} (1-\rho^{2})= \frac{\nu^{2} \hat{\rho}^{2}}{\alpha}.
\end{align*}

When we replace the above equalities in the Watanabe's call option expansion \eqref{LV_quadratic_option_call}, we obtain the following approximation for $V^{QC}(K)$.
\begin{align}\label{quadratic_call_price_sabr_normal}
\frac{V^{QC}(F,K)}{N_T} = \alpha^2 T \Bigg(G_{q}(y)  &+ \rho \nu \phi(y) \sqrt{T} +
\frac{1}{6}\bigg( 2y \phi(y) + 3\hat{\rho}^2  \bar{\Phi}(y) \bigg)\nu^{2}T  \nonumber \\
&+  \frac{1}{4} \rho^2 \nu^2  \bigg(\Big(y^{3}+y\Big) \phi(y) + 2 \bar{\Phi}(y)\bigg) T + O(\epsilon^3) \Bigg).
\end{align}

In this analysis, we compare the numerical performance of the Watanabe approximation against Hagan's approximation and a Monte Carlo simulation. The comparison is presented in figures \ref{fig:Quadratic_5Y}, \ref{fig:Quadratic_10Y}, and \ref{fig:Quadratic_15Y}. Our observation is that Hagan's approximation yields better results than Watanabe's approximation. We also noticed that the accuracy of Watanabe's approximation does not vary with maturity and declines linearly within the strike.

\begin{figure}[H]
  \includegraphics[width=1.0\linewidth]{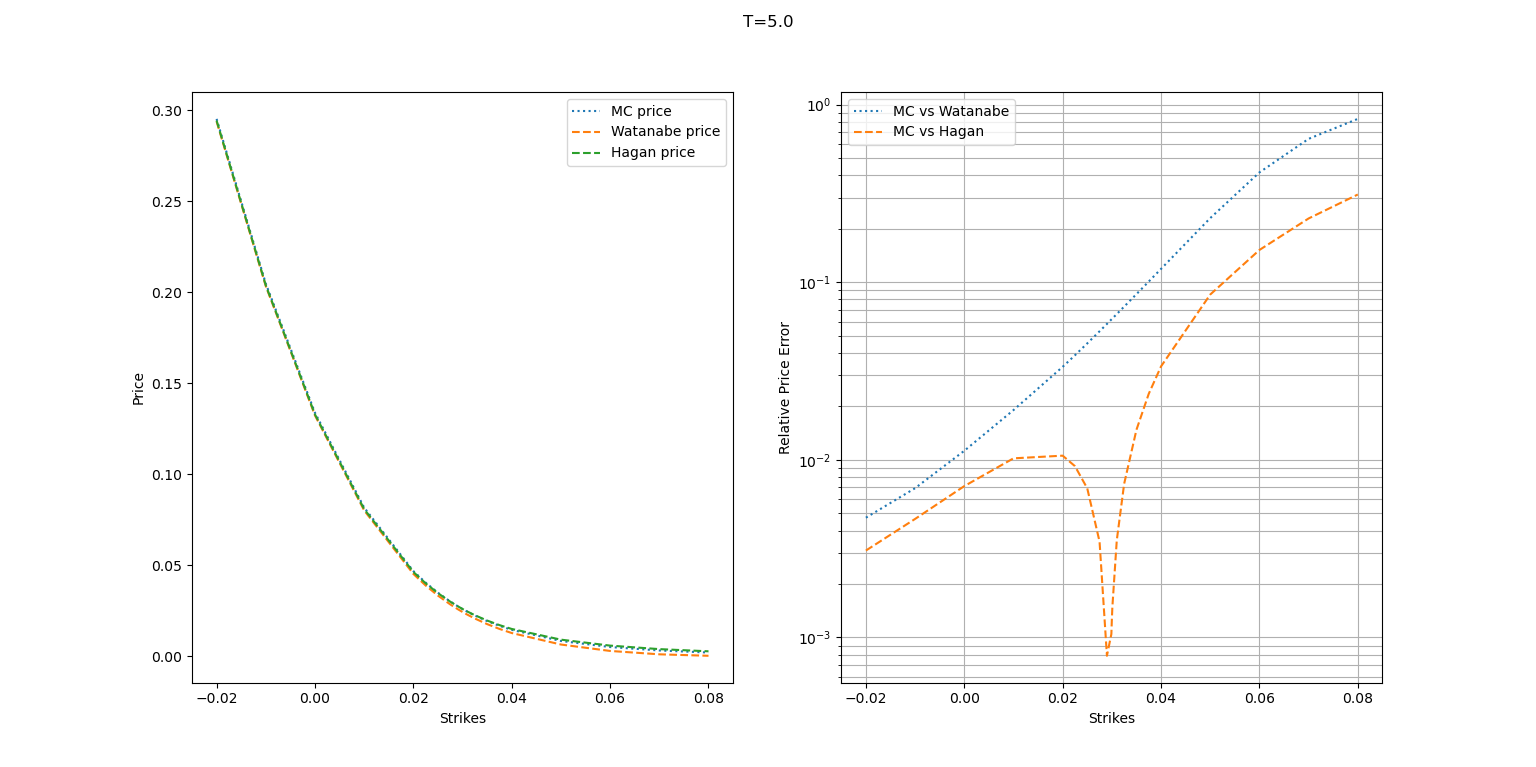}  
  \caption{Comparison of quadratic call options using Watanabe's expansion against the industry standards. The parameters considered are $\alpha=0.0083, \nu=0.335, \rho=0.23,$ and $T=5Y.$}
	\label{fig:Quadratic_5Y}
\end{figure}  

\begin{figure}[H]
  \includegraphics[width=1.0\linewidth]{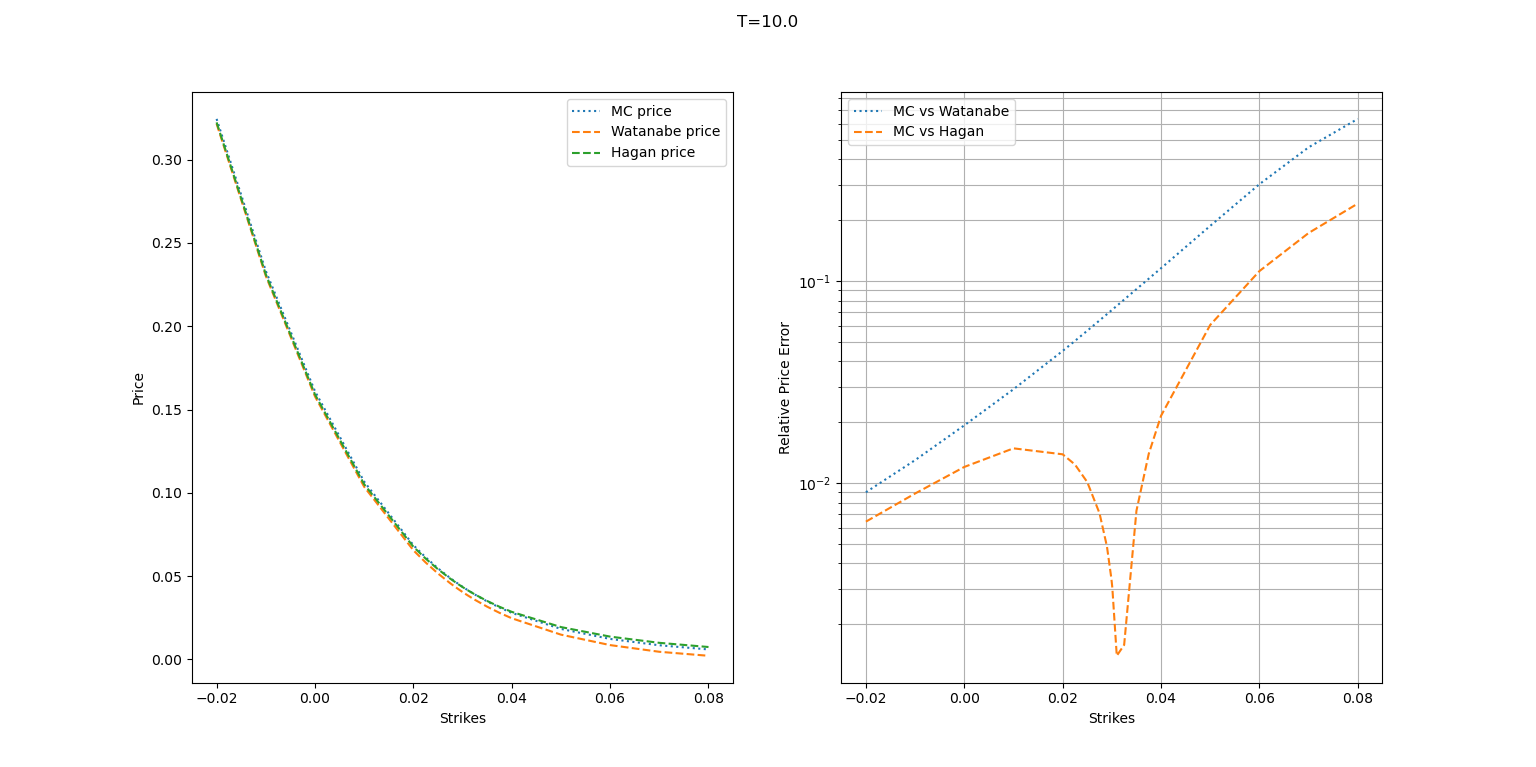}  
  \caption{Comparison of quadratic  call options using Watanabe's expansion against the industry standards. The parameters considered are $\alpha=0.0075, \nu=0.243, \rho=0.235,$ and $T=10Y.$}
	\label{fig:Quadratic_10Y}
\end{figure}

\begin{figure}[H]
  \includegraphics[width=1.0\linewidth]{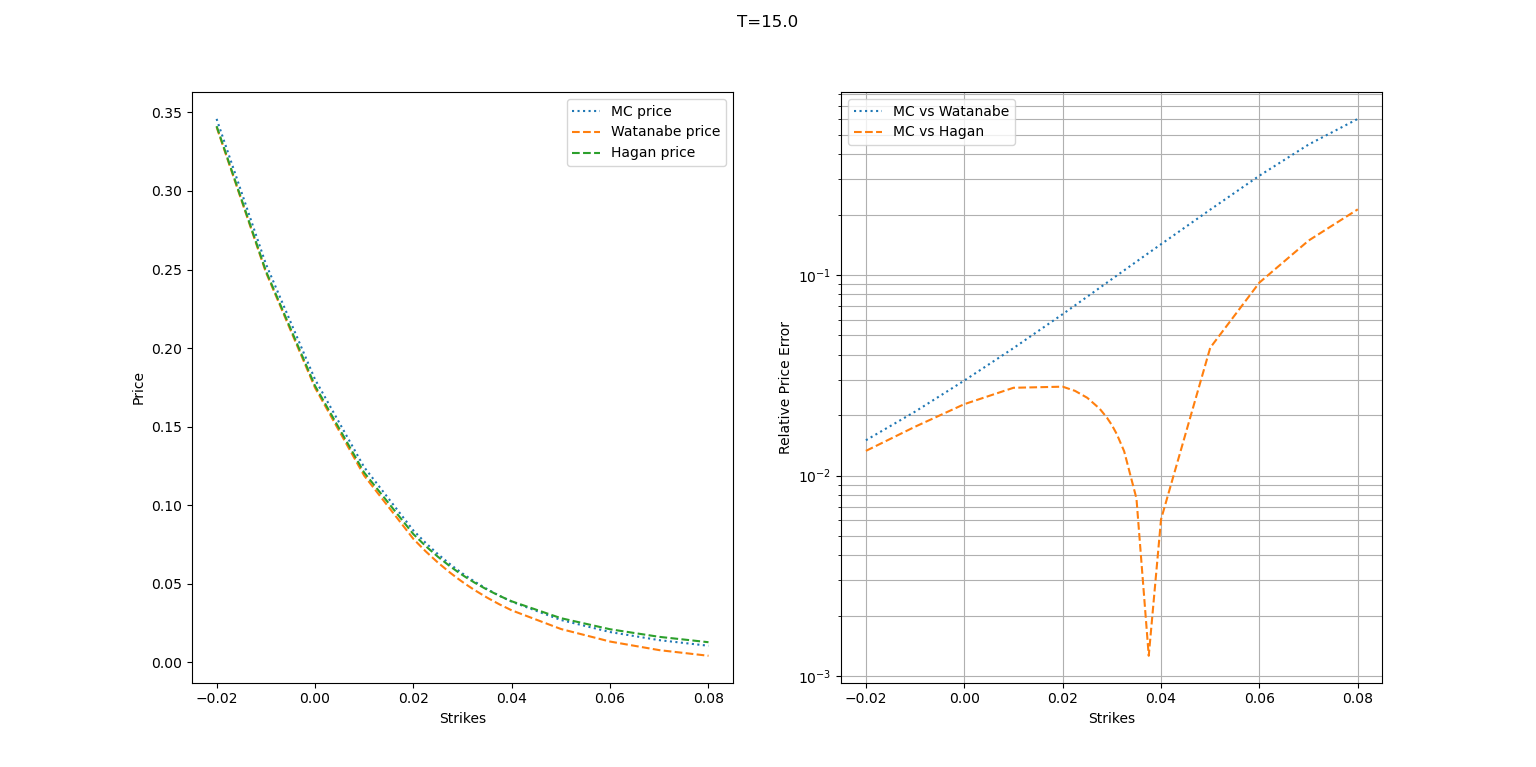}  
  \caption{Comparison of quadratic call options using Watanabe's expansion against the industry standards. The parameters considered are $\alpha=0.0068, \nu=0.215, \rho=0.195,$ and $T=15Y.$}
	\label{fig:Quadratic_15Y}
\end{figure}

%In the following figures, we observe the numerical performance of the approximation.
%\begin{figure}[ht]
%\begin{subfigure}{.3\textwidth}
  %\centering
  %\includegraphics[width=1.0\linewidth]{plots/normal_sabr_watanabe_T_5Y_5Y.png}  
  %\caption{$\alpha=0.0083, \nu=0.335, \rho=0.23.$}
  %\label{fig:sub-first}
%\end{subfigure}
%\begin{subfigure}{.3\textwidth}
  %\centering
  %\includegraphics[width=1.0\linewidth]{plots/normal_sabr_watanabe_T_5Y_10Y.png}  
  %\caption{$\alpha=0.0075, \nu=0.243, \rho=0.235.$}
  %\label{fig:sub-second}
%\end{subfigure}
%\begin{subfigure}{.3\textwidth}
  %\centering
  %\includegraphics[width=1.0\linewidth]{plots/normal_sabr_watanabe_T_5Y_15Y.png}  
  %\caption{$\alpha=0.0068, \nu=0.215, \rho=0.195.$}
  %\label{fig:sub-second}
%\end{subfigure}
%\end{figure}

\end{example}

\subsection{Quadratic option under stochastic local volatility}
SLV-type models are widely used in the industry; see, for example, \cite{Bang2019} and \cite{Labordere}. These models allow for control of the backbone through a layer of local volatility and its movement over time. In the interest rate market, a SABR dynamic is the market standard. We will consider a general SABR dynamics given by:
\begin{align}\label{general_SLV_mode}
dF_t &= C(F_t) \sigma_t dW_t, \nonumber \\
d\sigma_t &= \nu \sigma_t dB_t.
\end{align}
Here, $<dW_t,dB_t>= \rho dt$ and $C(\cdot)$ is a positive function that belongs to $\mathcal{C}^{3}_b(\mathbb{R})$. To achieve an expansion for $F_t$ similar to \eqref{expansion_lv_underlying}, we will consider the next parametrization of the above dynamic
\begin{align}\label{general_SLV_mode_small_e}
dF_{\epsilon, t} &= \epsilon  C(F_{\epsilon, t}) \sigma_{\epsilon, t} dW_t, \nonumber \\
d\sigma_{\epsilon, t} &= \epsilon  \nu \sigma_{\epsilon, t} dB_t .
\end{align}

%\begin{assumption}
%We will suppose that $C(\cdot)$ is a positive function that belongs to $\mathcal{C}^{3}_b(\mathbb{R})$
%\end{assumption}
\begin{lemma}\label{expansion_slv_underlying_lemma}
Under the dynamics given by \eqref{general_SLV_mode_small_e}, we can re-write the forward price dynamics as:
\begin{align}\label{expansion_slv_underlying}
F_{\epsilon,t} = F_0 &+ \epsilon \sigma(F_0)I_{(1)}(T) \nonumber \\
&+ \epsilon^2 \left(\partial_F C(F_0) \alpha I_{(1,1)}(T)+ \nu \int_{0}^{T}\int_{0}^{u_1}  dB_{u_2} dW_{u_2}\right) \nonumber \\
&+ \epsilon^{3} \bigg( \nu^2 \int_0^T \int_0^{u_1}\int_0^{u_2} dB_s dB_{u_2} dW_{u_1} + \partial_F C(F_0) \alpha \nu  \int_{0}^{T} B_u W_u dW_u \nonumber \\
&+  \frac{1}{2}\alpha^2 \partial^2_F C(F_0)C(F_0) \int_{0}^{T} W^2_u dW_u + \alpha  \nu \partial_F C(F_0) \int_0^{T} \int_0^{u_1} \int_0^{u_2} dB_s W_{u_2} dW_{u_1}\nonumber \\
&+ \alpha^{2} \partial_{F}C(F_0) I_{(1,1,1)}(T) \bigg) + E(T,\epsilon)
\end{align}
where under usual regularity conditions in $\sigma(\cdot)$ we have that $E(T,\epsilon) \sim O(\epsilon^{4})$.
\end{lemma}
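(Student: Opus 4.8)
The plan is to derive \eqref{expansion_slv_underlying} by the same bootstrap-and-collect It\^o--Taylor scheme that underlies Lemma \ref{expansion_lv_underlying_lemma}, but carried out jointly in the pair $(F_{\epsilon,t},\sigma_{\epsilon,t})$. First I would dispose of the volatility factor: since $d\sigma_{\epsilon,t}=\epsilon\nu\sigma_{\epsilon,t}\,dB_t$ is linear, $\sigma_{\epsilon,t}=\alpha\exp\!\big(\epsilon\nu B_t-\tfrac12\epsilon^2\nu^2 t\big)$, and expanding the exponential (equivalently iterating the SDE) gives
\[
\sigma_{\epsilon,t}=\alpha+\epsilon\,\alpha\nu B_t+\epsilon^2\,\alpha\nu^2\!\int_0^t\!\!\int_0^{u_1}dB_{u_2}\,dB_{u_1}+O(\epsilon^3),
\]
with all $L^p$-norms of the remainder bounded uniformly for $\epsilon$ in a bounded set and $t\le T$, because $\sigma_{\epsilon,t}$ has Gaussian-type moments. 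Substituting into $F_{\epsilon,t}=F_0+\epsilon\int_0^t C(F_{\epsilon,s})\sigma_{\epsilon,s}\,dW_s$ and Taylor-expanding $C$ about $F_0$,
\[
C(F_{\epsilon,s})=C(F_0)+\partial_F C(F_0)(F_{\epsilon,s}-F_0)+\tfrac12\partial_F^2 C(F_0)(F_{\epsilon,s}-F_0)^2+R_3(s),
\]
one uses that every stochastic or Lebesgue integration raises the $\epsilon$-order by at least one: there is the explicit prefactor $\epsilon$, the gap $\sigma_{\epsilon,s}-\alpha=O(\epsilon)$, and $C(F_{\epsilon,s})-C(F_0)=O(\epsilon)$ since $F_{\epsilon,s}-F_0=O(\epsilon)$.

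Next I would feed the lower-order expansion of $F_{\epsilon,\cdot}$ back into the integrand and collect the coefficient of each power of $\epsilon$ up to $\epsilon^3$. Order $\epsilon$ gives $C(F_0)\alpha\,I_{(1)}(T)=\sigma(F_0)I_{(1)}(T)$ (so $\sigma(F_0)=C(F_0)\alpha$). Order $\epsilon^2$ comes from the first-order part of $C(F_{\epsilon,s})$ (producing the $I_{(1,1)}$-type term with coefficient built from $\partial_F C(F_0)$) and from the first-order part of $\sigma_{\epsilon,s}$ (producing $\nu\int_0^T\!\int_0^{u_1}dB_{u_2}\,dW_{u_1}$). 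Order $\epsilon^3$ splits into: the $\tfrac12\partial_F^2 C(F_0)C(F_0)$-term from $(F_{\epsilon,s}-F_0)^2$, giving $\int_0^T W_u^2\,dW_u$; the product of the first-order corrections of $C(F_{\epsilon,s})$ and of $\sigma_{\epsilon,s}$, giving $\partial_F C(F_0)\alpha\nu\int_0^T B_uW_u\,dW_u$; the $\partial_F C(F_0)$ term multiplying the $\epsilon^2$-part of $F_{\epsilon,s}-F_0$, giving the $I_{(1,1,1)}$ piece together with the triple integral $\int_0^T\!\int_0^{u_1}\!\int_0^{u_2}dB_s\,W_{u_2}\,dW_{u_1}$; and the second-order part of $\sigma_{\epsilon,s}$, giving $\nu^2\int_0^T\!\int_0^{u_1}\!\int_0^{u_2}dB_s\,dB_{u_2}\,dW_{u_1}$. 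Identifying each of these multiple integrals (and using the standard It\^o identities that relate $\int_0^T W_u^2\,dW_u$, $\int_0^T B_uW_u\,dW_u$ to the iterated integrals $I_{(1,1,1)}$ and $I_{(0,1)}$) reproduces \eqref{expansion_slv_underlying}.

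The routine-but-load-bearing step is the remainder estimate $E(T,\epsilon)\sim O(\epsilon^4)$. I would define $E(T,\epsilon)$ as the difference between $F_{\epsilon,T}-F_0$ and the explicit $\epsilon,\epsilon^2,\epsilon^3$ terms, write an integral equation for it, and control its $L^p$-norm by Burkholder--Davis--Gundy plus Gronwall's inequality: boundedness of $C,\partial_F C,\partial_F^2 C,\partial_F^3 C$ (the $\mathcal C^3_b$ hypothesis) bounds the Taylor remainder $R_3$, the uniform-in-$\epsilon$ moments of $\sigma_{\epsilon,s}$ and of the iterated integrals bound the products, and each integration supplies the needed extra factor of $\epsilon$; integrability of $C(F_{\epsilon,s})\sigma_{\epsilon,s}$ is immediate since $C$ is bounded. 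The main obstacle here is bookkeeping — keeping the $\epsilon^3$ coefficients and the chosen normalization of the iterated integrals consistent across the substitution — rather than anything analytically deep.
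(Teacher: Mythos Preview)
Your proposal is correct and is precisely the approach the paper takes: the paper's own proof is the single sentence ``The result is straightforward by iteratively applying the It\^o--Taylor expansion,'' and what you have written is exactly a fleshed-out version of that iteration on the pair $(F_{\epsilon,t},\sigma_{\epsilon,t})$, with the remainder handled by BDG/Gronwall under the $\mathcal C^3_b$ assumption.
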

\begin{proof}
The result is straightforward by iteratively applying the It\^o-Taylor expansion.
\end{proof}

Then, as we have seen before, we can imply the $\hat{g}_{i}$ terms.
\begin{align}\label{g_s_slv}
\hat{g}_1(T) &= \frac{W_T}{\sqrt{T}} \nonumber \\
\hat{g}_2(T) &= \partial_F C(F_0)  \frac{\alpha}{\sqrt{T}}I_{(1,1)}(T) +  \frac{\nu }{\sqrt{T}}\int_{0}^{T}\int_{0}^{u_1}  dB_{u_2} dW_{u_2}  \nonumber \\
\hat{g}_3(T) &=  \frac{\nu^2}{\sqrt{T}}\int_0^T \int_0^{u_1}\int_0^{u_2} dB_s dB_{u_2} dW_{u_1} + \frac{\alpha \nu}{\sqrt{T}} \partial_F C(F_0)   \int_{0}^{T} B_u W_u dW_u \nonumber \\
&+  \frac{\alpha^2 }{2\sqrt{T}}\Big(\partial^2_F C(F_0)\Big)C(F_0) \int_{0}^{T} W^2_u dW_u +  \frac{\alpha  \nu}{\sqrt{T}}\partial_F C(F_0) \int_0^{T} \int_0^{u_1} \int_0^{u_2} dB_s W_{u_2} dW_{u_1} \nonumber \\
&+ \alpha^{2} \partial_{F}C(F_0) \frac{I_{(1,1,1)}(T)}{\sqrt{T}}.
\end{align}

\begin{theorem}[Stochastic local volatility expansion for quadratic payoffs]
Under the model \eqref{general_SLV_mode_small_e}, consider $y$ is given by \eqref{label_y}, $\phi(\cdot)$ is standard normal pdf, $\Phi(\cdot)$ the standard normal cdf and $\bar{\Phi}(\cdot)=1-\Phi(\cdot)$. Then, the 3rd order quadratic call option approximation price is
\begin{align}\label{SLV_quadratic_option_call}
\frac{V^{QC}(T,F,K)}{N_T}   = \alpha^2C^2(F_0)  T &\Bigg(G_{q}(T) +  \bigg(\nu \rho + \frac{1}{2}\alpha \partial_F  C(F_0) \bigg)  \phi(y) \sqrt{T}      \nonumber \\
&+\frac{2}{3}\bigg(\Big(\frac{1}{2}\nu \rho + \alpha \partial_F C(F_0)  \Big)\rho  \nu + \alpha^{2} \Big(\partial_{F}C(F_0)\Big)^2 \bigg) y\phi(y)T   \nonumber \\
&+ \alpha\bigg( \nu \partial_F C(F_0) +  \frac{1}{2} \alpha \partial^2_F C(F_0)C(F_0) \bigg) \bigg(\frac{2}{3}y\phi(y) + \bar{\Phi}(y) \bigg) T   \nonumber \\
&+ \bigg(\frac{1}{4}\Big(\alpha^2 \big(\partial_F C(F_0)\big)^2+ \nu^{2} \rho^{2} \Big) \Big(2\bar{\Phi}(y)+(y^3+y)\phi(y) \Big) \nonumber \\
& \hspace{0.75cm}+\frac{1}{6}\hat{\rho}^2 \nu^2\Big(2 y \phi(y) + 3 \bar{\Phi}(y)\Big) \bigg) T + O(\epsilon^3)\hspace{0.2cm}\Bigg)
\end{align}
where $G^{c}_{q}(y) = (1+y^2) \bar{\Phi}(y) - y\phi(y)$.\\
The 3rd order quadratic put option approximation price is
\begin{align}\label{SLV_quadratic_option_put}
\frac{V^{QP}(T,F,K)}{N_T}   = \alpha^2 C^2(F_0)  T &\Bigg(G_{q}(T) -  \bigg(\nu \rho + \frac{1}{2}\alpha \partial_F  C(F_0) \bigg)  \phi(y) \sqrt{T}      \nonumber \\
&- \frac{2}{3}\bigg(\Big(\frac{1}{2}\nu \rho + \alpha  \partial_F C(F_0)\Big)\rho  \nu  + \alpha^{2} \Big(\partial_{F}C(F_0)\Big)^2 \bigg) y\phi(y)T  \nonumber \\
&- \alpha\bigg( \nu \partial_F C(F_0) +  \frac{1}{2}\alpha \partial^2_F C(F_0)C(F_0)\bigg) \bigg(\frac{2}{3}y\phi(y) - \Phi(y) \bigg) T   \nonumber \\
&+ \bigg(\frac{1}{4}\Big(\alpha^2 (\partial_F C(F_0))^2+ \nu^{2} \rho^{2} \Big) \Big(2\Phi(y)- (y^3+y)\phi(y) \Big)  \nonumber \\
& \hspace{0.75cm}- \frac{1}{6}\hat{\rho}^2 \nu^2\Big(2 y \phi(y)+ 3\Phi(y)\Big) \bigg) T + O(\epsilon^3)\hspace{0.2cm}\Bigg)
\end{align}
where $G^{p}_{q}(y) = (1+y^2) \Phi(y) + y\phi(y)$.\\
Finally, the 3rd order quadratic swap price is 
\begin{align}\label{SLV_quadratic_swap}
\frac{V^{QS}(T,F,K)}{N_T}  = \alpha^2 C^2(F_0)  T  \Bigg((1+y^2) + \bigg(&\alpha \nu \partial_F C(F_0)  + \frac{1}{2}\alpha^{2}\Big( \partial^2_F C(F_0)C(F_0)+ \Big(\partial_F C(F_0)\Big)^2  \Big)\nonumber \\
 &  + \frac{1}{2} \nu^{2}\bigg) T + O(\epsilon^3) \Bigg).
\end{align}
\end{theorem}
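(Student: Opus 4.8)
The plan is to follow exactly the route used for the local-volatility theorem, now feeding in the richer coefficients \eqref{g_s_slv}. First I would start from the generic quadratic expansions of Theorem \ref{QV expansions} --- \eqref{expansion_base_quadratic_call} for the call, \eqref{expansion_base_quadratic_put} for the put and \eqref{expansion_base_quadratic_swap} for the swap --- which are already sorted by powers of $\epsilon$; the overall prefactor $\epsilon^2\sigma^2(F_0)T$ there specializes to $\epsilon^2\alpha^2 C^2(F_0)T$, since the leading diffusion coefficient of $F_{\epsilon}$ in \eqref{general_SLV_mode_small_e} is $\alpha C(F_0)$. Then I would substitute $\hat g_1(T),\hat g_2(T),\hat g_3(T)$ from Lemma \ref{expansion_slv_underlying_lemma}, so that the remaining task is to evaluate, for the call, the expectations $\mathbb E^{\mathbb Q^N}\big[((\hat g_1-y)^+)^2\big]$, $\mathbb E^{\mathbb Q^N}\big[(\hat g_1-y)^+\hat g_2\big]$, $\mathbb E^{\mathbb Q^N}\big[(\hat g_1-y)^+\hat g_3\big]$ and $\mathbb E^{\mathbb Q^N}\big[I(\hat g_1>y)\hat g_2^2\big]$, together with the obvious analogues (with $(y-\hat g_1)^+$ and $I(y>\hat g_1)$) for the put and (with $(\hat g_1-y)$, $\hat g_2^2$, $(\hat g_1-y)\hat g_3$) for the swap.

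Each such expectation I would reduce to a one-dimensional Gaussian integral through a two-step conditioning. The first step is to write $B_t=\rho W_t+\hat\rho W_t^{\perp}$ with $\hat\rho=\sqrt{1-\rho^2}$ and $W^{\perp}$ independent of $W$; since $\hat g_1(T)=W_T/\sqrt T$, every iterated integral in $\hat g_2,\hat g_3$ carrying an odd power of $W^{\perp}$ has zero expectation conditional on $\hat g_1$, so conditioning collapses $\hat g_2$ and $\hat g_3$ to polynomials in $\hat g_1$ with coefficients assembled from $\alpha,\nu,\rho,\hat\rho,\partial_F C(F_0)$ and $\partial^2_F C(F_0)$. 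The conditional expectations I need come from elementary It\^o and integration-by-parts identities --- such as $I_{(1,1)}(T)=\tfrac12(W_T^2-T)$, $I_{(1,1,1)}(T)=\tfrac16(W_T^3-3TW_T)$ and $\int_0^T W_u^2\,dW_u=\tfrac13 W_T^3-\int_0^T W_u\,du$ --- used with the Brownian-bridge identity $\mathbb E[W_u\mid W_T]=(u/T)W_T$; the triple integrals in $\hat g_3$ that couple $W$ and $B$ are treated the same way once the $\rho/\hat\rho$ split is performed. The second step is to take the outer expectation against the standard normal law of $\hat g_1$ and apply the moment identities $\int_y^{\infty}\phi=\bar\Phi(y)$, $\int_y^{\infty}x\phi(x)\,dx=\phi(y)$, $\int_y^{\infty}(x^2-1)\phi(x)\,dx=y\phi(y)$ and $\int_y^{\infty}x^3\phi(x)\,dx=(y^2+2)\phi(y)$, which turn everything into the combinations $G^c_q(y)=(1+y^2)\bar\Phi(y)-y\phi(y)$, $\phi(y)$, $\bar\Phi(y)$, $y\phi(y)$ and $(y^3+y)\phi(y)$ appearing in the statement. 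Collecting coefficients order by order and letting $\epsilon\to1$, as in the proofs of the earlier theorems, yields \eqref{SLV_quadratic_option_call}; the put formula \eqref{SLV_quadratic_option_put} comes out of the same computation with the sign changes forced by $(y-\hat g_1)^+$ and $I(y>\hat g_1)$, and \eqref{SLV_quadratic_swap} follows at once from the $\epsilon^2$-truncation of \eqref{expansion_base_quadratic_swap} because $\mathbb E[(\hat g_1-y)^2]=1+y^2$ and $\mathbb E[(\hat g_1-y)\hat g_2]=0$, so only the $y$-independent $\mathbb E[\hat g_2^2]$ and $\mathbb E[(\hat g_1-y)\hat g_3]$ enter the correction.

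The hard part will be purely combinatorial rather than conceptual: $\hat g_3(T)$ in \eqref{g_s_slv} is a sum of five distinct iterated integrals, several of which couple $W$ and $B$, so the order-$\epsilon^2$ terms of the call expansion force me to expand $\hat g_2^2$ and the cross product $(\hat g_1-y)^+\hat g_3$ while tracking which pieces survive the conditioning on $\hat g_1$, with which $\rho$-versus-$\hat\rho$ weight, and then to recognize that several of them recombine into the single coefficient of $2\bar\Phi(y)+(y^3+y)\phi(y)$. To guard against bookkeeping slips I would use two checks: setting $\nu=0$ must reproduce the local-volatility expansion \eqref{LV_quadratic_option_call} with $\sigma(\cdot)=\alpha C(\cdot)$, which can be cross-checked against Lemma \ref{expansion_lv_underlying_lemma} and \eqref{coeff_lv_expansion}; and the three resulting formulas must be consistent with the quadratic put--call parity $V^{QS}=V^{QP}+V^{QC}$ read off term by term.
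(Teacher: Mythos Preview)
Your proposal is correct and follows essentially the same route as the paper: start from the generic quadratic expansions of Theorem~\ref{QV expansions}, plug in the $\hat g_i$ from Lemma~\ref{expansion_slv_underlying_lemma}, reduce each expectation by conditioning on $\hat g_1(T)$ via the $\rho/\hat\rho$ decomposition and Brownian-bridge identities (the paper records these as Appendix~B.\!1--B.\!3 and E.\!1--E.\!3), then integrate against the standard normal and send $\epsilon\to1$. Your additional consistency checks ($\nu=0$ reducing to \eqref{LV_quadratic_option_call} and quadratic put--call parity) are a sensible safeguard for the bookkeeping you flag, though the paper itself does not include them in the proof.
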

\begin{proof}
We consider the quadratic price expansion of Theorem \ref{QV expansions} and the conditional expectations values calculated in \eqref{appendix:b11}, \eqref{appendix:b12} and \eqref{appendix:b13}. Then, the quadratic call option price, unifying \eqref{appendix:e51}
\eqref{appendix:e52}, and \eqref{appendix:e53} along with  Lemma \ref{expansion_slv_underlying_lemma}. Then, we take the limit $\epsilon \to 1$ to obtain the quadratic call option price approximation. We proceed similarly with the quadratic put option price approximation and the quadratic swap option price. 
\end{proof}

\begin{remark}
We must note that when $\nu=0, \rho=0$ and $\alpha=1$, the model \eqref{general_SLV_mode} is a pure local volatility model. But, when we substitute these parameters in \eqref{SLV_quadratic_option_call}, then we recover exactly \eqref{LV_quadratic_option_call}.
\end{remark}

\begin{example}
If $C(F)=1$, then we recover the normal SABR model. For this case, $\partial_F C(F)=0$ and $\partial^2_F C(F)=0$. Therefore \eqref{SLV_quadratic_option_call} is reduced to \eqref{quadratic_call_price_sabr_normal}), which is consistent with the fact that the local volatility is the equivalent local volatility for the normal SABR model.
\end{example}

\section{Future works}\label{sec:future_works}
In a recent paper \cite{Skoufis2024}, the author proposes a closed formula for the convexity adjustment in the case of average RFR indexes when the underlying follow a time-dependent SABR model. This problem was addressed in  \cite{DavidRaul2023} in the case of the Cheyette model with local volatility using Malliavin calculus. To make the paper self-contained, let us introduce the basic concepts. Let us define
\begin{align*}
I(T_a,T_b)&=\mathbb{E}^{\mathbb{Q}}\left[\frac{e^{\int_{T_a}^{T_b} r_u du} - 1}{T_b -T_a}  \right],\\
I_{avg}(T_a,T_b)&=\frac{\int_{T_a}^{T_b} r_u du}{T_b - T_a},
\end{align*}\label{sabr_time_dependent_rfr}
and $R(t)=\mathbb{E}^{\mathbb{Q}}\left[I(T_a,T_b) \right]$. In addition, we will suppose that
\begin{align}
dR_{t}(T_a,T_b) &= h_t C(R_{t}(T_a,T_b)) \sigma_{t} dW_t, \nonumber \\
d\sigma_{t} &= \nu \sigma_{t} dB_t \nonumber \\
<dW_t,dB_t> &= \rho dt
\end{align}
where $h_t = \min \left\{1, \left(\frac{T_b - t}{T_b - T_a} \right)^{q} \right\}$ with $0 < q < 1$. From the definition of $I_{avg}(T_a,T_b)$, we have that
\begin{equation*}
I_{avg}(T_a,T_a,T_b) = \frac{\ln\Big(1 + (T_b - T_a) I(T_a,T_b)\Big)}{T_b - T_a} = \frac{\ln\Big(1 + (T_b - T_a) R(T_a, T_a,T_b)\Big)}{T_b - T_a}.
\end{equation*}
Then, the issue is to compute $\mathbb{E}\left[ I_{avg}(T_a,T_b) \right]$ when $R_{t,T_a,T_b}$ follow the dynamic \eqref{sabr_time_dependent_rfr}. Using Itô formula, we obtain that
\begin{align*}
\mathbb{E}\left[ I_{avg}(T_a,T_b) \right] &= I(0,T_a,T_b) + \frac{1}{2} \mathbb{E}^{\mathbb{Q}} \left[\int_{0}^{T_b} G^{\prime \prime}(R(u,T_a,T_b)) h_u^{2} \sigma^{2}_u C^{2}(R(u,T_a,T_b)) ds \right] \\
& \approx I(0,T_a,T_b) + \frac{1}{2} \mathbb{E}^{\mathbb{Q}} G^{\prime \prime}(R(0,T_a,T_b)) \mathbb{E}^{\mathbb{Q}}\bigg[ \Big(R(T_b,T_a,T_b) - R(0,T_a,T_b)\Big)^2 \bigg]
\end{align*}
where 
\begin{align*}
G(R)=\frac{\ln(1 + (T_b - T_a) R)}{T_b - T_a}.
\end{align*}
 Let us suppose that we have an expansion similar to \eqref{base_expansion}. Then, it is easy to show that
\begin{align*}
\mathbb{E}^{\mathbb{Q}}&\bigg[ \Big(R(T_b,T_a,T_b) - R(0,T_a,T_b)\Big)^2 \bigg]  \\
&= \alpha^2 C^2(R(0,T_a,T_b)) \mathbb{E}^{\mathbb{Q}}\bigg[ \hat{g}^2_{1}(T_b) +  \hat{g}^2_{1}(T_b) + 2\hat{g}_{1}(T_b) \hat{g}_{3}(T_b) + O(\epsilon^3)\bigg]  T.  
\end{align*}
The challenge for future papers is to obtain the expansion of $F_{\epsilon, t}$ under a general SLV dynamic, this way could compute the convexity adjustment of CMS or average RFR, even the pricing of generic terminal payoff for a general SLV model.

\section{Conclusions}\label{sec:Conclusion}
In conclusion, we have proposed a new approach for pricing CMS derivatives. Our method uses Mallaivin's calculus and Watanabe's expansions. Firstly, we establish a model-independent connection between the price of a CMS derivative and the quadratic payoffs using Mallaivin calculus. Then, we extend the results of Watanabe's expansions to the quadratic payoff case under local and stochastic local volatility. The approximations obtained are generic. To evaluate its accuracy, we compare the approximations numerically under the SABR model against the industry standards: Hagan's approximation, and a Monte Carlo simulation. Our approximations are as precise as Hagan's approximation, but they cover a wider range of models.

\section*{Appendix}
\appendix
We will consider $W_{t}$, $W_{\perp,t}$ two independent Brownian motions.
\renewcommand{\thesection}{\Alph{section}.\arabic{section}}

\section{Normal SABR}\label{SABR_Calculations}
\subsection{Calculation of $\mathbb{E}^{\mathbb{Q}^{N}}\left[ \hat{g}_{2}(T) |\hat{g}_1(T)\right]$}
\label{appendix:a11}
We have that 
\begin{equation*}
\mathbb{E}^{\mathbb{Q}^{N}}\Bigg[ \hat{g}_{2}(T) \Big|\hat{g}_1(T)\Bigg]=\mathbb{E}^{\mathbb{Q}^{N}}\Bigg[\frac{\nu}{\sqrt{T}} \bigg(\rho \int_{0}^{T} W_{u} dW_{u} + \bar{\rho} \int_{0}^{T} W_{\perp,u} dW_{u}\bigg) \Big| \hat{g}_{1}(T)\Bigg].
\end{equation*}
Therefore,
\begin{equation}
\mathbb{E}^{\mathbb{Q}^{N}}\Bigg[ \hat{g}_{2}(T) \Big|\hat{g}_1(T)\Bigg]= \frac{\rho \nu}{\sqrt{T}} \mathbb{E}^{\mathbb{Q}^{N}}\Bigg[  \int_{0}^{T} W_{u} dW_{u} \Big| \hat{g}_{1}(T) \Bigg] = \frac{1}{2}\rho \nu\sqrt{T}\Bigg(\hat{g}^{2}_{1}(T) - 1\Bigg).
\end{equation}

\subsection{Calculation of $\mathbb{E}^{\mathbb{Q}^{N}}\left[ \hat{g}_{3}(T) |\hat{g}_1(T)\right]$}
\label{appendix:a12}
We will follow the same approach as in the previous section. Therefore, we have
\begin{eqnarray*}
\hat{g}_{3}(T) &=& \frac{\nu^{2}}{2\sqrt{T}} \int_{0}^{T} B_{u}^{2} - u dW_{u} \\
&=& \frac{\nu^{2}}{2\sqrt{T}} \int_{0}^{T} B_{u}^{2} dW_{u} - \frac{\nu^{2}}{2\sqrt{T}} \int_{0}^{T} u dW_{u}.%\\
%&=& A(T) + B(T). 
\end{eqnarray*}
In one hand, we obtain that
\begin{align}
\mathbb{E}^{\mathbb{Q}^{N}}\Bigg[B(T)\Big|W_{T} = \hat{g}_{1}(T) \sqrt{T} \Bigg] =%  \frac{\nu^{2}}{2\sqrt{T}} \int_{0}^{T} \hat{g}_{1}(T) \sqrt{T} \frac{u}{T} du = \frac{\nu^{2}}{2\sqrt{T}} \frac{\hat{g}_{1}(T) T^{\frac{3}{2}}}{2} 
\frac{1}{4}\nu^{2} \hat{g}_{1}(T) T.
\end{align}
On other hand, we have that
\begin{align}
\frac{\rho^{2} \nu^{2}}{2\sqrt{T}} \mathbb{E}^{\mathbb{Q}^{N}}\Bigg[\int_{0}^{T} W_{u} dW_{u} \Big|W_{T} = \hat{g}_{1}(T) \sqrt{T} \Bigg] %&= \frac{\rho^{2} \nu^{2}}{2\sqrt{T}} \left( \frac{2 \hat{g}^{3}_{1}(T) - \hat{g}_{1}(T)}{6} \right) T^{\frac{3}{2}} 
&= \frac{1}{12}\rho^{2} \nu^{2} \Bigg( 2 \hat{g}^{3}_{1}(T) - \hat{g}_{1}(T) \Bigg) T. \\
\frac{\hat{\rho}^{2} \nu^{2}}{2\sqrt{T}} \mathbb{E}^{\mathbb{Q}^{N}}\Bigg[\int_{0}^{T} W^{2}_{u,\perp} dW_{u} \Big|W_{T} = \hat{g}_{1}(T) \sqrt{T} \Bigg] &= %\frac{\hat{\rho}^{2} \nu^{2} \hat{g}_{1}(T) T^{\frac{3}{2}}}{4\sqrt{T}} = 
\frac{1}{4}\hat{\rho}^{2} \nu^{2}\hat{g}_{1}(T)T.
\end{align}
Therefore, when combine it, we get that
\begin{equation}\label{g3_g1}
\mathbb{E}^{\mathbb{Q}^{N}}\Bigg[\hat{g}_{3}(T) \Big| \hat{g}_{1}(T) \Bigg] = \frac{1}{12} \rho^{2} \nu^{2} \Bigg( 2 \hat{g}^{3}_{1}(T) - \hat{g}_{1}(T) \Bigg) T + \frac{1}{4}\hat{\rho}^{2} \nu^{2}\hat{g}_{1}(T) T  - \frac{1}{4}\nu^{2} \hat{g}_{1}(T) T.
\end{equation}

\subsection{Calculation of $\mathbb{E}^{\mathbb{Q}^{N}}\left( \hat{g}^{2}_{2}(T) |\hat{g}_1(T)\right)$}
\label{appendix:a13}
From the definition of $\hat{g}_{2}(T)$, we have that
\begin{align*}
\frac{\nu^{2}}{T} \Bigg(\underbrace{\rho^{2} \bigg(\int_{0}^{T} W_{u} dW_u\bigg)^{2}}_{A(T)} + \underbrace{\bar{\rho}^{2} \bigg(\int_{0}^{T} W_{\perp, u} dW_u\bigg)^{2}}_{B(T)} + \underbrace{2 \rho \bar{\rho} \bigg(\int_{0}^{T} W_{\perp, u} dW_u\bigg) \bigg(\int_{0}^{T} W_{u} dW_u\bigg)}_{C(T)} \Bigg)% &= \\
%A(T) + B(T) + C(T) &.
\end{align*}
Now, we will compute $\mathbb{E}^{\mathbb{Q}^{N}}\Big[\cdot | \hat{g}_1(T)\Big]$ for each term. The easiest one is $\mathbb{E}^{\mathbb{Q}^{N}}\Big[C(T)|\hat{g}_{1}(T)\Big]$, due to the independence of $W_{T}$ and $W_{\perp,t}$. We get 
$$
\mathbb{E}^{\mathbb{Q}^{N}}\Big[C(T)|\hat{g}_{1}(T)\Big] =  0.
$$
Then, we will compute the rest of the terms
\begin{equation}\label{g_2_2_first_term}
\frac{\nu^{2} \rho^{2}}{T} \mathbb{E}^{\mathbb{Q}^{N}} \Bigg[\bigg(\int_{0}^{T} W_{u} dW_{u}\bigg)^{2} \Big|W_{T} = \hat{g}_{1}(T) \sqrt{T}\Bigg] %=\frac{\nu^{2} \rho^{2}}{T}\left(\frac{\hat{g}^{2}_{1}(T)T}{2} - \frac{T}{2} \right)^{2} 
= \frac{1 }{4}\nu^{2} \rho^{2} T\Bigg(\hat{g}^{2}_{1}(T) - 1 \Bigg)^{2}.
\end{equation}

For the last term, we must remember the next identity for a Brownian bridge $X_s$ with $X_{0} = 0$ and $X_{T} = \hat{g}_{1}(T)\sqrt{T}$
\begin{equation}\label{brownian_bridge_dw_dw}
\mathbb{E}^{\mathbb{Q}^{N}}\Bigg[dX_u dX_s \Bigg] = \frac{\hat{g}^{2}_{1}(T)}{T} + \frac{\min(u,s)}{t(t -\min(u,s))} + \delta(u -s) - \frac{1}{(t -\min(u,s))} du ds.  
\end{equation}
Therefore, we have that
\begin{align}\label{g_2_2_second_term}
&\frac{\nu^{2} \hat{\rho}^{2}}{T} \mathbb{E}^{\mathbb{Q}^{N}}\Bigg[ \left(\int_{0}^{T} W_{u,\perp} dW_{u} \right)^{2} \Big|W_{T} = \hat{g}_{1}(T) \sqrt{T} \Bigg] \nonumber \\
&= \frac{\nu^{2} \hat{\rho}^{2}}{T} \left( \int_{0}^{T}\int_{0}^{T} \min(u_{1},u_{2}) \mathbb{E}^{\mathbb{Q}^{N}}\Bigg[dW_{u_{1}} dW_{u_{2}}  \Big|W_{T} = \hat{g}_{1}(T) \sqrt{T} \Bigg]\right) \nonumber \\
&= \frac{\nu^{2} \hat{\rho}^{2}}{T} \int_{0}^{T}\int_{0}^{T} \min(u_{1},u_{2}) \left( \frac{\hat{g}^{2}_{1}(T)t}{t^{2}} + \frac{\min(u_{1},u_{2})}{t(t - \min(u_{1},u_{2}))} + \delta(u_1-u_2) - \frac{1}{t- \min(u_{1},u_{2})} \right) du_{1}du_{2}  \nonumber \\
%&= \frac{1}{3}\nu^{2}\hat{\rho}^{2}T \bigg(\hat{g}_{1}^{2}(T) - 1\bigg) + \frac{1}{2}\nu^{2}\hat{\rho}^{2}T \nonumber\\
&= \frac{1}{6}\nu^{2}\hat{\rho}^{2}\bigg(2 \hat{g}^{2}_{1}(T)+1\bigg)T.
\end{align}
Now, if we join \eqref{g_2_2_first_term} and \eqref{g_2_2_second_term}, we obtain that
\begin{equation}\label{sec:a13:g2_power_2}
\mathbb{E}^{\mathbb{Q}^{N}}\Bigg[\hat{g}^{2}_{1}(T)\Big|W_{T} = \hat{g}_{1}(T) \sqrt{T}\Bigg] %%= \frac{1 }{4}\nu^{2} \rho^{2} T\Bigg(\hat{g}^{2}_{1}(T) - 1 \Bigg)^{2} + \frac{\nu^{2}\hat{\rho}^{2} \hat{g}^{2}_{1}(T)T}{3} + \frac{\nu^{2}\hat{\rho}^{2}T}{6}.
=\frac{1}{12}\nu^{2}\Bigg( 3\rho^{2} \bigg(\hat{g}^{2}_{1}(T) - 1 \bigg)^{2} + 4\hat{\rho}^{2} \hat{g}^{2}_{1}(T)+ 2\hat{\rho}^{2}\Bigg) T.
\end{equation}

\section{SABR}\label{lambda_SABR_Calculations}

\subsection{Calculation of $\mathbb{E}^{\mathbb{Q}^{N}}\left[ \hat{g}_{2}(T) |\hat{g}_1(T)\right]$}
\label{appendix:b11}
From \eqref{g_s_slv}, we have that
\begin{equation*}
\mathbb{E}^{N}\Bigg[ \hat{g}_{2}(T) \Big| \hat{g}_{1}(T) \Bigg] = \alpha\partial_F C(F_0)  \mathbb{E}^{N}\Bigg[  \frac{I_{(1,1)}(T)}{\sqrt{T}} \Big| \hat{g}_{1}(T) \Bigg] + \nu \rho  \mathbb{E}^{N}\Bigg[  \frac{I_{(1,1)}(T)}{\sqrt{T}} \Big| \hat{g}_{1}(T) \Bigg]
\end{equation*}
Now, if we use \eqref{brownian_bridge_dw_dw}, we have that 
\begin{equation*}
\mathbb{E}^{N}\Bigg[\hat{g}_{2}(T) \Big| \hat{g}_{1}(T) \Bigg] = \frac{\sqrt{T}}{2} \bigg(\hat{g}^{2}_1(T) - 1\bigg) \bigg(\nu \rho + \frac{1}{2}\alpha \partial_F  C(F_0) \bigg).
\end{equation*}

\subsection{Calculation of $\mathbb{E}^{\mathbb{Q}^{N}}\big[ \hat{g}_{3}(T) |\hat{g}_1(T)\big]$}
\label{appendix:b12}
From the expression of $\hat{g}_3(T)$ in \eqref{g_s_slv}, we obtain that
\begin{equation*}
\mathbb{E}^{N}\Bigg[\hat{g}_{3}(T) \Big| \hat{g}_{1}(T) \Bigg] = A(T) + B(T) + C(T)
\end{equation*}
where
\begin{align*}
A(T) &=  \frac{\nu^2}{\sqrt{T}}\int_0^T \int_0^{u_1}\int_0^{u_2} dB_s dB_{u_2} dW_{u_1} + \alpha \nu \Big(\partial_F C(F_0)\Big) \frac{1}{\sqrt{T}}\int_{0}^{T} B_u W_u dW_u,  \\
B(T) &=  \frac{1}{2} \alpha^2 \Big(\partial^2_F C(F_0)\Big)C(F_0)\frac{1}{\sqrt{T}}\int_{0}^{T} W^2_u dW_u + \alpha  \nu \Big(\partial_F C(F_0)\Big) \frac{1}{\sqrt{T}}\int_0^{T} \int_0^{u_1} \int_0^{u_2} dB_s W_{u_2} dW_{u_1}, \\
C(T) &= \alpha^{2} \Big(\partial_{F}C(F_0)\Big)^2 \frac{1}{\sqrt{T}}I_{(1,1,1)}(T).
\end{align*}
Through a direct computation and employing the representation of the distinct iterated integrals, we get that
\begin{align*}
\mathbb{E}^{N}\Bigg[A(T)\Big|\hat{g}_1(T)\Bigg]&= \frac{1}{6}\nu^2 \rho^2 \bigg(\hat{g}^3_1(T) - 3\hat{g}_1(T)\bigg) T + \frac{1}{6}\alpha \nu \Big(\partial_F C(F_0)\Big) \bigg(2\hat{g}^3_1(T)-  3\hat{g}_1(T)\bigg) T   \\
\mathbb{E}^{N}\Bigg[B(T)\Big|\hat{g}_1(T)\Bigg]&=  \frac{1}{12}\alpha^2 \Big(\partial^2_F C(F_0)\Big)C(F_0)\bigg(2\hat{g}^3_1(T) -  3\hat{g}_1(T)\bigg)T + \frac{1}{3}\alpha \rho  \nu \Big(\partial_F C(F_0)\Big) \bigg(\hat{g}^3_1(T) -  3\hat{g}_1(T)\bigg)T \\
\mathbb{E}^{N}\Bigg[C(T)\Big|\hat{g}_1(T)\Bigg]&= \frac{1}{3}\alpha^{2} \Big(\partial_{F}C(F_0)\Big)^2  \bigg(\hat{g}^3_1(T) -  3\hat{g}_1(T) \bigg)T
\end{align*}
Rearranging the previous equalities, we obtain that
\begin{align*}
\mathbb{E}^{N}\Bigg[ \hat{g}_{3}(T) | \hat{g}_{1}(T) \Bigg] &=\frac{1}{3}\Bigg(\frac{1}{2}\nu^2 \rho^2 + \alpha \rho  \nu \partial_F C(F_0)  + \alpha^{2} \Big(\partial_{F}C(F_0)\Big)^2 \Bigg) \Bigg(\hat{g}^3_1(T) - 3\hat{g}_1(T) \Bigg)T \\
&+\frac{1}{6}\Bigg(6\alpha \nu \partial_F C(F_0) +  3\alpha^2 \Big(\partial^2_F C(F_0)\Big)C(F_0)  \Bigg) \Bigg(2\hat{g}^3_1(T) -  3\hat{g}_1(T)\Bigg)T.
\end{align*}

\subsection{Calculation of $\mathbb{E}^{\mathbb{Q}^{N}}\big[ \hat{g}^{2}_{2}(T) |\hat{g}_1(T)\big]$}
\label{appendix:b13}
From \eqref{appendix:b11} and \eqref{sec:a13:g2_power_2}, we have that
\begin{equation*}
\mathbb{E}^{N}\Bigg[ \hat{g}^{2}_{2}(T) \Big| \hat{g}_{1}(T) \Bigg] = A(T) + B(T)
\end{equation*}
where 
\begin{align*}
A(T)&= \frac{1}{4} \alpha^2 \Big(\partial_F C(F_0)\Big)^2 \Big(\hat{g}^{2}_1(T) - 1\Big)^2 T ,\\
B(T) &= \frac{\nu^{2}}{12}\bigg(3 \rho^{2}\Big(\hat{g}^{2}_{1}(T) - 1 \Big)^{2}  + 4\hat{\rho}^{2} \hat{g}^{2}_{1}(T) +2\hat{\rho}^{2}\bigg)T.
\end{align*}
Then, if we combine both expressions, we get that
\begin{equation*}
\mathbb{E}^{N}\Bigg[ \hat{g}^{2}_{2}(T) | \hat{g}_{1}(T) \Bigg] = \frac{1}{12}\Bigg(3\alpha^2 \Big(\partial_F C(F_0)\Big)^2+ \nu^{2} \rho^{2} \bigg) \Big(\hat{g}^{2}_{1}(T) - 1 \Big)^{2}  +4\nu^{2}\hat{\rho}^{2} \hat{g}^{2}_{1}(T) + 2\nu^{2}\hat{\rho}^{2}\Bigg)T.
\end{equation*}

\section{Call option expansion normal SABR}
\subsection{$\epsilon^0$-coefficient call option expansion}
In this case, we must to compute $\mathbb{E}^{\mathbb{Q}^{N}}\bigg[\Big(\hat{g}_1(T) - y\Big)^{+}\bigg]$. From the expression of $\hat{g}_1$, see \eqref{g_normal_sabr}, it is easy to show that

\begin{equation}\label{order_0_price}
	\mathbb{E}^{\mathbb{Q}^{N}}\bigg[\Big(\hat{g}_1(T) - y\Big)^{+}\bigg] = \phi(y) - y \bar{\Phi}(y) 
\end{equation}
where $\phi(\cdot)$ is the PDF for a standard normal and $\bar{\Phi}(y) = 1 - \Phi(y)$ with $\Phi(\cdot)$ the CDF for a standard normal.

\subsection{$\epsilon$-coefficient call option expansion}
We have to compute $\mathbb{E}^{\mathbb{Q}^{N}}\bigg[I\Big(\hat{g}_{1}(T) > y\Big) \hat{g}_{2}(T) \bigg]$. From \ref{appendix:a11}, we have that
\begin{align}\label{order_1_price}
\mathbb{E}^{\mathbb{Q}^{N}}\Bigg[ I\Big(\hat{g}_{1}(T) > y\Big) \hat{g}_2(T)\Bigg]&=\mathbb{E}^{\mathbb{Q}^{N}}\Bigg[ I\Big(\hat{g}_{1}(T) > y\Big) \mathbb{E}^{\mathbb{Q}^{N}}\Big[\hat{g}_2(T) | \hat{g}_1(T) \Big] \Bigg]  \nonumber\\
 %&= \frac{\rho \nu \sqrt{T}}{2} \left(\int_{y}^{\infty} s^{2} \phi(s) ds - \Phi(-y)\right)\nonumber\\ 
&= \frac{\rho \nu \sqrt{T}}{2}  y \phi(y).
\end{align}

\subsection{$\epsilon^2$-coefficient call option expansion}
As we did before, we have that 
\begin{align}\label{order_3_term_1}
\mathbb{E}^{\mathbb{Q}^{N}}\Bigg[ I\Big(\hat{g}_1(T) > y\Big) \hat{g}_3(T) \Bigg] &= \mathbb{E}^{\mathbb{Q}^{N}}\Bigg[I\Big(\hat{g}_1(T) > y\Big) \frac{\nu^{2}}{12}\bigg(\rho^{2}  \Big( 2 \hat{g}^{3}_{1}(T) - \hat{g}_{1}(T)\Big)  + 3\hat{\rho}^{2} \hat{g}_{1}(T) - 3 \hat{g}_{1}(T)  \bigg)T\Bigg] \nonumber \\
&= \frac{1}{12}\nu^{2}\phi(y) \bigg(\rho^{2}\left(2y^{2} + 1\right) + 3\hat{\rho}^{2} - 3 \bigg) T,
\end{align}
and
\begin{align}\label{order_3_term_2}
\mathbb{E}^{\mathbb{Q}^{N}}\Bigg[ \frac{1}{2}\delta\Big(\hat{g}_1(T) - y\Big) \hat{g}^{2}_{2}(T)\Bigg] &= \frac{1}{24}\nu^{2}\phi(y)\bigg( 3 \rho^{2}\Big(y^{2} - 1 \Big)^{2} + 4\hat{\rho}^{2} y + 2\hat{\rho}^{2}\bigg) T.
\end{align}

\section{Quadratic call option local volatility}
\subsection{$\epsilon^0$-coefficient quadratic call option expansion} \label{epsilon_0_quadratic_call}
We must to compute $\mathbb{E}^{\mathbb{Q}^{N}}\Bigg[\Big((\hat{g}_1 - y)^{+}\Big)^{2}\Bigg]$. From the expression of $\hat{g}_1$, see \eqref{g_normal_sabr}, it   is easy to show that
\begin{equation}\label{qc_order_0_price_lv}
	\mathbb{E}^{\mathbb{Q}^{N}}\Bigg[\Big((\hat{g}_1(T) - y)^{+}\Big)^{2}\Bigg] =  \Big(1+y^{2}\Big) \bar{\Phi}(y) - y \phi(y). 
\end{equation}
\subsection{$\epsilon$-coefficient quadratic call option expansion}\label{epsilon_1_quadratic_call}
To calculate this term, we have to calculate $\mathbb{E}^{\mathbb{Q}^{N}}\Bigg[2 \Big(\hat{g}_1(T) - y\Big)^{+} \hat{g}_2(T)\Bigg]$. It is easy to show using integration by parts that

\begin{equation}\label{qc_order_1_price_lv}
\mathbb{E}^{\mathbb{Q}^{N}}\Bigg[2\Big(\hat{g}_1(T) - y\Big)^{+} \hat{g}_{2}(T) \Bigg] = \Big(\partial_{F}\sigma(F_0)\Big) \sqrt{T} \phi(y).
\end{equation}

\subsection{$\epsilon^2$-coefficient quadratic call option expansion}\label{epsilon_2_quadratic_call}
The last term that we will calculate is 

\begin{equation*}
\mathbb{E}^{\mathbb{Q}^{N}}\Bigg[2 \Big(\hat{g}_1(T) - y\Big)^{+}  \hat{g}_3(T) + I\Big(\hat{g}_1(T) > y\Big) \hat{g}^{2}_2(T)\Bigg] = \mathbb{E}^{\mathbb{Q}^{N}}\Bigg[A(T) + B(T)\Bigg]
\end{equation*}
where
\begin{align*}
A(T) &= 2 \Big(\hat{g}_1(T) - y\Big)^{+} \hat{g}_3(T), \\
B(T) &= I\Big(\hat{g}_1(T) > y\Big) \hat{g}^{2}_2(T).
\end{align*}
From \eqref{coeff_lv_expansion}, we have that 
\begin{equation*}
B(T) =  \frac{1}{4}I\Big(\hat{g}_1(T) > y\Big) \Big(\partial_F\sigma(F_0)\Big)^2  \Big(\hat{g}^{2}_1(T) - 1\Big)^2 T.
\end{equation*}
Also, we get that
\begin{align*}
A(T) =  \Big(\hat{g}_1(T) - y\Big)^{+} \Bigg(\frac{1}{3}\bigg( &\Big(\partial^{2}_F \sigma(F_0)\Big) \sigma(F_0) + \Big(\partial_F \sigma(F_0)\Big)^{2}\bigg) \bigg(\hat{g}^{3}_1(T) - 3\hat{g}_1(T)\bigg)T\\
&  + \Big(\partial^{2}_F \sigma(F_0)\Big) \sigma(F_0)\bigg(\hat{g}_1(T)T - \frac{1}{\sqrt{T}}\int_{0}^{T} W_u du  \bigg) \Bigg).
\end{align*} 
Therefore, we have that
\begin{align*}
\mathbb{E}^{\mathbb{Q}^{N}}\Bigg[2 \Big(\hat{g}_1(T) - y\Big)^{+}  \hat{g}_3(T)\Bigg] &=\frac{1}{6}\Bigg( 2\bigg( \Big(\partial^{2}_F \sigma(F_0)\Big) \sigma(F_0) + \Big(\partial_F \sigma(F_0)\Big)^{2}\bigg) y \phi(y) + 3\Big(\partial^{2}_F\sigma(F_0)\Big)\sigma(F_0)\bar{\Phi}(y) \Bigg) T ,  \nonumber \\
\mathbb{E}^{\mathbb{Q}^{N}}\Bigg[ I\Big(\hat{g}_1(T) > y\Big) \hat{g}^{2}_2(T) \Bigg] &= \frac{1}{4}\Big(\partial_F \sigma(F_0)\Big)^{2}  \bigg(\Big(y^3+y\Big)\phi(y)+ 2\bar{\Phi}(y)\bigg)T.
\end{align*}

\section{Quadratic call option SABR}
\subsection{$\epsilon^0$-coefficient quadratic call option expansion}\label{appendix:e51}
We do the same calculation that in \eqref{qc_order_0_price_lv}.

\subsection{$\epsilon$-coefficient quadratic call option expansion}\label{appendix:e52}
We have to compute $\mathbb{E}^{\mathbb{Q}^{N}}\Bigg[2\Big(\hat{g}_{1}(T) - y\Big)^{+} \hat{g}_{2}(T) \Bigg]$. Then, 
\begin{equation*}
\mathbb{E}^{\mathbb{Q}^{N}}\Bigg[2\Big(\hat{g}_{1}(T) - y\Big)^{+} \hat{g}_{2}(T) \Bigg] = \mathbb{E}^{\mathbb{Q}^{N}}\Bigg[2\Big(\hat{g}_{1}(T) - y\Big)^{+}  \mathbb{E}^{\mathbb{Q}^{N}}\Big[ \hat{g}_{2}(T)| \hat{g}_{1}(T)\Big]\Bigg].
\end{equation*}
Using \eqref{appendix:b11}, we have that
\begin{align*}
\mathbb{E}^{\mathbb{Q}^{N}}\Bigg[2\Big(\hat{g}_{1}(T) - y\Big)^{+}  \mathbb{E}^{\mathbb{Q}^{N}}\Big[ \hat{g}_{2}(T)| \hat{g}_{1}(T)\Big]\Bigg] &= \sqrt{T} \bigg(\nu \rho + \frac{1}{2}\alpha \Big(\partial_F  C(F_0)\Big) \bigg) \mathbb{E}^{\mathbb{Q}^{N}}\bigg[\Big(\hat{g}_{1}(T) - y\Big)^{+} \Big(\hat{g}^{2}_{1}(T) - 1\Big)  \bigg] \\
&= \sqrt{T} \bigg(\nu \rho + \frac{1}{2}\alpha \Big(\partial_F  C(F_0)\Big) \bigg) \int_{y}^{\infty} (s - 1)(s^2 - 1) \phi(s) ds \nonumber \\
& = \left(\nu \rho + \frac{1}{2}\alpha \Big(\partial_F  C(F_0)\Big) \right)  \phi(y) \sqrt{T}
\end{align*}

\subsection{$\epsilon^{2}$-coefficient quadratic call option expansion}\label{appendix:e53}
As we did in \eqref{epsilon_2_quadratic_call}, we must calculate
\begin{equation*}
\mathbb{E}^{\mathbb{Q}^{N}}\Bigg[  2 \Big(\hat{g}_1(T) - y\Big)^{+}  \hat{g}_3(T) + I\Big(\hat{g}_1(T) > y\Big) \hat{g}^{2}_2(T)  \Bigg]= A(T) + B(T)
\end{equation*}
where 
\begin{align*}
A(T) &= 2 \mathbb{E}^{\mathbb{Q}^{N}}\Bigg[\Big(\hat{g}_1(T) - y\Big)^{+}  \hat{g}_3(T) \Bigg],\\
B(T)&= \mathbb{E}^{\mathbb{Q}^{N}}\Bigg[I\Big(\hat{g}_1(T) > y\Big) \hat{g}^{2}_2(T) \Bigg].
\end{align*}
From \eqref{g_s_slv} and using \eqref{appendix:b11},\eqref{appendix:b12} and \eqref{appendix:b13}, we obtain that
\begin{align*}
A(T) &=  \frac{2}{3}\bigg(\frac{1}{2}\nu^2 \rho^2 + \alpha \rho  \nu \Big(\partial_F C(F_0)\Big)  + \alpha^{2} \Big(\partial_{F}C(F_0)\Big)^2 \bigg)y\phi(y)T \\
&= \bigg(\alpha \nu \Big(\partial_F C(F_0)\Big) +  \frac{1}{2} \alpha^2 \Big(\partial^2_F C(F_0)\Big)C(F_0)\bigg) \bigg( \frac{2}{3}y\phi(y) + \bar{\Phi}(y) \bigg)T,
\end{align*}
and
\begin{equation*}
B(T)= \frac{1}{12}\Bigg(\bigg(3\alpha^2 \Big(\partial_F C(F_0)\Big)^2+ \nu^{2} \rho^{2} \bigg) \bigg(2\bar{\Phi}(y)+\Big(y^3+y\Big)\phi(y) \bigg)  + 4 \nu^2 \hat{\sigma}^2y \phi(y)  + 6 \nu^2 \hat{\sigma}^2\bar{\Phi}(y)  \Bigg)T .
\end{equation*}

% TODO: Change bibliographystyle according to the journal style - it should know the online entry, see Matsuda04
\bibliography{references/references, references/references-books, references/references-own, references/references-online}
%\bibliography{references-export}

\end{document}